\definecolor{lightgray}{gray}{0.9}
\mathchardef\myhyphen="2D
\newcommand{\remove}[1]{}
\newcommand{\bI}{{\mathsf I}}
\newcommand{\bH}{{\mathsf H}}
\newcommand{\bPr}{{\mathsf P}}
\newcommand{\cM}{{\cal M}}
\newcommand{\sfM}{{\mathsf M}}
\newcommand{\bF}{{\mathbb F}}
\newcommand{\SD}{\mathsf{SD}}
\newcommand{\bl}{\color{blue}}
\newcommand{\ext}{\mathsf{Ext}}
\newcommand{\rwc}{{$(\rho_r, \rho_w)$-channel}}
 \newcommand{\rw}{{$(\rho_r, \rho_w)$}}
\newcommand{\cK}{{\cal K}}
\newcommand{\cO}{{\cal O}}
\newcommand{\sfH}{\mathsf{H}}
\newcommand{\sfI}{\mathsf{I}}
\newcommand{\rs}{\mathsf{RS}}
\newcommand{\Adv}{\mathsf{Adv}}
\newcommand{\LVACenc}{\mathsf{LVACenc}}
\newcommand{\LVACdec}{\mathsf{LVACdec}} 
\newcommand{\MAC}{\mathsf{MAC}}
\newcommand{\sk}{\mathsf{SK}}
\newcommand{\sfC}{\mathsf{C}}
\newcommand{\lv}{\mathsf{LV}}
\newcommand{\bequation}{\begin{equation}}
\newcommand{\eequation}{\end{equation}}
\begin{document}
\title{Information-theoretically Secure Key Agreement over Partially Corrupted Channels}

\author{Reihaneh Safavi-Naini and Pengwei Wang\\
Department of Computer Science, University of Calgary, Canada\\
reisaf@ymail.com  and pengwwan@ucalgary.ca}
\maketitle

\begin{abstract}

Key agreement is a fundamental cryptographic primitive. 
It has been proved that key agreement protocols
with security against computationally unbounded adversaries cannot exist in a 
setting where Alice and Bob do not have  dependent variables and 
communication between them is fully  public, or fully controlled by the adversary. 
In this paper we consider this  problem when the  adversary
can ``partially" control the  channel.   We motivate these adversaries by considering
adversarial corruptions at the physical layer of communication,  give a definition
of adversaries that can   ``partially"  eavesdrop and  ``partially"  corrupt the communication.
We formalize 
security and reliability of key agreement protocols, derive bounds on the rate of
 key agreement, and give constructions that achieve the bound. 
 Our results show that
it is possible to have secret key agreement as long as some of the communicated 
symbols remain private and unchanged by the adversary. 
 We relate our results to the previous known results, and
discuss future work.

\remove{
udied
in two communication settings, first when the communication is over a public
authenticated channel, and second when the channel is fully controlled 
by the adversary,  The protocol may start either without any prior shared variables, or when 
variables $X$, $Y$ and $Z$, with probability distribution $P(X, Y, Z)$ 
is shared by Alice, Bob and Eve respectively.  It is proved that such protocols
impossible  without a shared
variable, and in presence of an active unbounded adversaries.
}
\end{abstract}

\begin{keywords}
Wiretap Channel, Active Adversary, Key Agreement, Secure Message Transmission, Physical Layer Security, Information Theoretic Security.
\end{keywords}

\section{Introduction}

One of the fundamental problems in cryptography is establishing a shared secret key between two parties.
% who do not have a  prior shared secret.  
 The problem has been  studied in different settings. % under different assumptions.
%in different settings, 
Important distinctions among settings are based on,  (i)  the adversary's computational power,
% is  polynomially bounded, or the adversary is computationally unlimited, 
(ii)  if %the communication channel connecting the two participant is reliable against active 
communicants have access to a public authenticated channel,
 or  if the  communication is over tamperable channel,   and, (iii) if there are initial shared information, possibly in the
 form of dependent random  variables.
\remove{ There are also other assumptions such as existence of hardware tokens \cite{GISV10}, public randomness source (random beacon) \cite{??} or common reference string \cite{??}.

   Diffie-Hellman ground breaking protocol  \cite{DH76} correspond to the setting of computationally bounded adversary, public authenticated channel and no correlated randomness. Extensions of this protocol \cite{} also allow the channel to be tamperable. 
}

We consider key agreement  with security against computationally  unlimited adversaries.  Information theoretic key agreement  was first considered by Maurer \cite{M92,MAU93} with the motivation of
%  Maurer - Secret key agreement by public discussion from common infromation  - Transaction on IT 93
providing  positive results  for scenarios that secure communication, with security against a computationally unlimited adversary, had been
proved to be impossible.
The two main approaches to securely sending a message over a channel that is eavesdropped by an unlimited adversary are due to 
Shannon %sec the approaches of   Shannon's model of perfect secrecy 
\cite{S48} who formalized the notion of perfect secrecy when Alice and Bob are connected by a public reliable channel, 
and second,  Wyner \cite{W75}  who introduced wiretap model  %, give negative results.  In Shannon's setting, allowing the adversary to have perfect  (eavesdropping) access to communication channel results in the requirement of a shared key the same length as the message and so without such prior key, secure communication is impossible. In Wyner's setting  there
in which communicants use the noise in the channel to provide  prefect secrecy for the communication 
without requiring  a shared secret key.
Security guarantee in both these models  although strong, is only achievable under very restrictive assumptions.
Perfect secrecy in Shannon's model requires communicants to  share a secret key of the length  at least the size of the message.
%n wbut assumes that the adversary does not have a perfect view of the
%communicated channel.
Secure communication in wiretap setting  is only possible if
%Alice sends  encoded messages to Bob, that because of the noise in the
%Eve's channel becomes partially visible to Eve.
 %Wyner proved  perfectly secure message transmission is possible only if
 Eve's view of the codeword is ``noisier" than the Bob's view of the codeword,
which does not hold  for settings that the eavesdropper is closer to Alice  than Bob,
  and has a better communication channel to Alice.
 %secure communication will be impossible.
Thus positive result in both above  approaches are under conditions that are of limited practical applications.

Motivated by this observation,
\remove{
To provide positive results for % secure communication in
 a wider range of scenarios, 
 }
 Maurer 
considered  the more basic problem of secret key agreement (secure communication is
possible, if one can have a shared key)
%formalized and studied a model 
 where Alice and Bob want to share  a secret key while the communication channel between them is
 eavesdropped by a computationally  unlimited adversary.
  Maurer considered a minimum setting  where Alice, Bob and Eve hold dependent variables $X$, $Y$ and $Z$, %respectively, such that
with a joint distribution $\bPr_{XYZ}$,
and Alice and Bob can interact over  a {\em public discussion (PD)} channel:  an authenticated channel that is fully
visible by all  system participants.  
The setting is ``minimum" in the sense that it was  proved \cite{MAU93} that without any initial joint distribution, 
secure  key agreement is impossible.
The joint partially leaked distribution  $\bPr_{XYZ}$ is the only resource of Alice and Bob, and
so a basic question is when a secure shared key can be derived by interacting 
over the PD.
The joint distribution can be generated  by  receiving transmission of a public random beacon (e.g broadcast by a satellite) that 
broadcasts samples of a random variable, and this is received by all parties 
%by receiving a random sequence generated by a public beacon, 
over their individual channels. %  associated with each party in the system.
The distribution can also be simply generated by  Alice sending a random string $X^n$ of length $n$ to  Bob,
 over  a  discrete memoryless wiretapped 
channel,  resulting in $Y^n$ and $Z^n$, at Bob and Eve respectively.
%physical layer noisy channel.

Maurer showed that interaction over PD in the above setting is indeed powerful,  and could result in 
a shared random key and hence secure communication,  when  Shannon and Wyner
 models give negative results.   
\remove{
 and with this initialized distribution $\bPr_{XYZ}$, assuming an interactive public discussion channel, Alice and Bob can generated shared keys, significantly by expanding  the range of settings
that privacy against an unbounded eavesdropping adversary is required.
}

Maurer \cite{MAU97}  later considered the case that %also  initiated the study of  key agreement over
 no PD is available, and   communications are over a tamperable channel. % ``NOT authenticated channel".  
The adversary in this setting can    completely control the communication and % by the Eve who can
modify or drop the sent  messages. 
Intuitively when the adversary is  computationally unbounded and fully controls the 
 communication channel,  one cannot expect any secure key to be established. %communication.
Maurer proved that without initial correlated variables, this is indeed the case even if extra 
restrictions (e.g. secure channel in one direction) are placed on the adversary.
He also proved  existential results for the case that $\bPr_{XYZ}$ exists.

The works \cite{MW03,RW04,KR09,DW09} % consider computationally unbounded adversaries and  unauthenticated channel, 
all consider this powerful model of adversary, corresponding to  {\em Dovel-Yao} model of adversary that is applicable  in networks
\cite{DY83}.
To enable establishment of shared key however, authors  assume strongly correlated variables, in particular identical  \cite{RW04} or   ``close" secrets  \cite{KR09,DW09},
 that are partially leaked.
The role of 
interaction in these settings is ``reconciliation" that results in a shared secret (in the case of close secrets), 
and privacy amplification that   \cite{RW04,BBCM95}  %the interaction does not reduce the entropy of the correlated variable,
%and the final extracted key has most of the entropy of the initial variables.
extracts the randomness in the shared  variable.

\remove{
A related problem  is  two-party information-theoretic key agreement  when the two parties  have ``close" secrets that is partially leaked to the adversary: the two close partially leaked secrets correspond to $X$ and $Y$, and $Z$ is the adversary's information about the secrets. The problem has also been called information reconciliation and privacy amplification \cite{RW04}. In this setting, existence of the initial correlated variables (close secrets) effectively reduces the key agreement problem to error correction and extraction over adversarially controlled channel.
}

In this paper we consider a setting where no dependent  variables is held by Alice and Bob.
Maurer showed impossibility of  key agreement when there is no shared correlated variables, %without prior shared key 
 and the adversary can either completely eavesdrop the communication channel, or
fully control the  channel.  
We ask  if positive results could be obtained when 
%reduces corruption power of 
the adversary ``partially" sees and ``partially" controls the channel.

\remove{
there are  ``reasonable"   adversarial settings, in which secure key agreement 
becomes possible. Obviously this question could only have positive answer if the adversary does not fully control the channel
and  we need to restrict the adversary's capabilities, with the goal of capturing 
%We however would be interested in adversaries that model 
realistic scenarios.
}

Adversaries  with {\em partial}  access to the channel can be naturally defined
% instead  of considering adversaries 
%at the  higher
at the lowest layer  of OSI  (Open Systems Interconnection) protocol stack, known as the {\em physical layer}.
\remove{
 (i.e. transport layer, IP layer) that are able to replace or drop
the whole 
information packets, 
we consider adversaries 
 that operate at the %lower layer of protocol stack,  known as
{\em physical layer} of OSI protocol stack.
}
This is the same OSI layer that is used in the Wyner wiretap model where
partial view of the adversary (due to noise in Eve's channel)  is used to provide secure communication. %l as a resource.
%to model eavesdropping. 
Here we  consider adversaries with partial view of the message and partial  tampering  ability.
%at this layer also.

Information units   at  the physical layer of communication correspond to elements of a $q$-ary alphabet \cite{Eli57,GR06,GS98,HAMMING50,S48}.
The goal of encoding at this layer traditionally  is to provide reliability %raditionally been used to encode information
 against channel noise.
We consider the case that these units (channel symbols) % encoding of encode wand these are 
can be individually accessed,  changed,  or blocked by a jamming adversary.
In practice, an adversary  with 
a transceiver, depending on their  location and transceiver capability, can  intercept some  of the transmiited symbols, 
and/or add adversarial noise to corrupt   them. 
An adversary with full control of all communicated symbols corresponds to the network layer adversary in \cite{MAU97} and 
the follow up works (same as Dolev-Yao model). %, with negative results in the case that no initial jointly distributed variables exist.
By moving to physical layer, we  are able to consider adversaries with different levels of control over the 
communicated messages,  and study the key agreement problem against a more refined  classes  of adversaries,
that capture corruption at physical layers of communication.

\subsection{Our work}

%, while less emphasis is put on cryptographic definitions of adversary and  security notions.
%theoretical modelling and analysis, are limited to the network layer adversary.
We initiate cryptographic study of key agreement problem 
in presence of physical layer adversaries, and show  positive results  %can be obtained
when  key agreement in presence of network layer full eavesdropping and/or corrupting adversaries,  is impossible.

\remove{
In particular,  secret key agreement problem focusses on ``intuitively" secure system 
in presence of a 
 study of information-theoretic secure key agreement when the channel is adversarial, 
has been primarily assuming  network layer tampering adversaries, and using the work of \cite{MAU97,RW04} as
the theoretical  foundation.
}

We consider key agreement problem between Alice and Bob who are 
connected by a channel 
% that there is no a priori  correlated variable held by communicants, and
%assuming a  physical layer  adversary
 that is partially controlled by Eve, and the partially controlled communication is 
 their  % assume  this is the 
 only resource. % of the communicants.
 % controls the  communication channel,
% and this is their only resource.  This is 
\remove{
In wiretap model  partial (noisy) view of the sent codeword is the only resource of the
communicants for secure communication;  in our model partial control of Eve in the
communicants' resource for establishing shared key.
}

 Alice and Bob send   messages back and forth, over the channel.
\remove{
each transmission % round the adversary can 
``partially" controlled by the the adversary. %channel.
}
We define the partial control  of  Eve by their ability to select, (i)
a subset $S_r$ of the transmitted components for eavesdropping, and (ii) a subset $S_w$ of transmitted components, to corrupt
by adding  (jamming) a noise vector. 
 The two sets are chosen adaptively in each round and may have overlap.
 We impose 
the restriction that in each round, $|S_r|\leq \rho_r n$ and $|S_w|\leq \rho_w n$,
where $n$ is the length of transcript in that round, and $\rho_r$ and  $\rho_w$ are fixed constants in the range $[0,1]$, specifying
the adversaries capabilities.  A network layer adversary corresponds to  $\rho_r=\rho_w=1$, and a perfectly secure 
authenticated channel corresponds to  $\rho_r=\rho_w=0$.

%The model captures wireless adversaries,   and   
Parameters $\rho_r$ and  $\rho_w$ model  wireless adversaries'
 limitation of receiving antenna and receiver, and jamming capability, respectively.
As experimentally shown in \cite{PTDC11},  because of the constraints of real systems and channels, 
making a deterministic change to an individual transmitted symbol is ``hard", if possible at all. 
Assuming additive noise captures the uncertain effect of corruption on the transmitted symbol. 
We do not consider an  adversary with unlimited jamming power; such adversaries
 can always completely disrupt the communication.
Our adversary has a fixed eavesdropping and corruption budget in each round.
A stronger adversary  is when the adversary has a total budget for the whole protocol,
and can plan how to spend it in different round with the restriction that,
$\sum_i | S_r^i | \leq \rho_r n$ and  $\sum_i | S_w^i | \leq \rho_w n$. 
We start with  the less demanding case that the budget of each round is fixed. This is also the
more realistic case as the partial view of the adversary in most cases is due to limitations of the adversary's 
hardware and processing capabilities.

We also consider a third parameter, $\rho = \frac{ |S_r \cup S_w| }{n}$ which is the fraction of transmitted
symbols that are either leaked or corrupted.
%and so intuitively cannot contribute to the final shared key.

\vspace{1mm}

\noindent
{\em Models and definition.}
We define a $(\rho_r, \rho_w)$-channel, and use the definition of  (interactive) key agreement protocol
in \cite{M92,MAU97}, replacing the PD with $(\rho_r, \rho_w)$-channels.
The protocol proceeds in rounds,  in each Alice or Bob sends a message. 
At  the end of the protocol, Alice and Bob  output keys $K_A$ and $K_B$, respectively.
The  security 
properties of  a secure key agreement protocol
 are as follows.

\begin{itemize}
\item {\em Strong reliability:}  The probability that Alice and Bob do not derive the same key satisfy, $\bPr (K_A\neq K_B)\leq \delta$; 
\item {\em Security:}  The generated key is private, given Eve's view of the communication; 
\item {\em Randomness.} The  generated key is statistically close to uniform distribution.
\end{itemize}

Defining reliability in information theoretic key agreement, when adversary tampers with the communication, is subtle.
 Maurer  \cite{MAU97}  and the follow up work \cite{KR09}, 
consider the case that at the end of the  protocol, Alice and Bob either   output a key, %element of the set
%$\{ K_A, \perp\}$ ( $\{ K_B, \perp\}$)
or $\perp$. That is they either output a key, or declare the protocol 
unsuccessful.
The protocol is successful when 
% successful as long as no shared leaked key is obtained. That is, the protocol is considered
%successful, if 
at least one of the communicants output $\perp$ (and so guaranteeing that a shared leaked key will not be
established) or, a secret shared key is established. 
 Using  this  definition, a protocol may have negligible failure probability but at that same time in
most cases no {\em shared secret key} be established.
This is a natural definition considering a network layer adversary that completely controls the communication.

Our definition of  {\em strong reliability}  is the same as in \cite{MAU93,MAU97} where 
 communication is over PD and the adversary is passive.
In such  setting (initial $\bPr_{XYZ}$ and communication over PD) Alice and Bob may output 
different keys % not because of corruption by the channel adversary, but
 because of the working of the
protocol and  properties of the functions that are used for the  derivation of the key, and
not because of corruption by the channel adversary. 

A surprising result of our work is that by slightly reducing  the control of the adversary on the channel ($\rho_r$ and $\rho_w$ can be close to 1),
% to be
%quantifiable, 
one can expect strong reliability in presence of corrupted communication.
% and  considering  the protocol to fail,  only in  cases that the key of Alice and Bob are not the same.

Definition of secrecy
 and key randomness is by requiring  that
 the distribution of secret key given Eve's observation $K|Z$, is statistically close  to a  uniformly distributed variable $U$ which is of 
 the same length as $K$.  

\vspace{1mm}
\noindent
{\em Rate and impossibility results.}
Following \cite{MAU97},  we define 
the secret key rate of a protocol as the rate that Alice and Bob agree on a shared key, while Eve's total  information 
remains bounded by $\epsilon$. The rate is 
given by
%We define the rate of a key agreement protocol by
 $R = \frac{\log |\cK|}{n \log |\Sigma|}$, where $\Sigma$  is the channel  alphabet  (symbols sent over the channel).

 We prove,
\[ 
R\leq 1-\rho.
\]

The bound effectively shows that the fraction of symbols that are either eavesdropped or tampered with in each round, cannot
contribute to the secret key. This is intuitively expected and the proof shows that interaction cannot 
overcome this restriction.

In this definition of rate, all communicated symbols  contribute to the communication cost.
In the definition of rate in \cite{M92,MAU97} however, communication over public channel is  free and 
the rate only considers  the number of shared triplets $(X,Y,Z)$ used by the protocol.
When $\bPr_{X^nY^nZ^n}$  is the result of Alice sending $X^n$ to Bob over a wiretap channel,
only the cost of this communication  (over physical 
layer)  is considered in the calculation of rate.  The communication over the PD  is for reconciliation and 
extraction, and are considered free. 

The bound  shows impossibility of secure  key agreement  when $\rho>1$.

The above bound is derived  for  key agreement protocols assuming  strong reliability.
The bound also holds for   AWTP-PD channel under the same condition. 

In Section \ref{SecRate} we derive the  upper bound 
\[ R\leq 1-\rho_r\]
 on the secret key rate of key agreement protocols under weak reliability condition.
The same bound holds for key agreement over AWTP-PD channels also.

%In Section \ref{SecRate}, we show the rate of key agreement over AWTP-PD channel is $R \leq 1 - \rho$. This is same as the upper bound on rate of key agreement protocol over PD channel \cite{MAU97}, since $R \leq \max(\bI(X^n, Y^n), \bI(X^n, Y^n | Z^n)) = 1 - \rho$. 

Maurer's  bound  $R \leq \max(\bI(X^n, Y^n), \bI(X^n, Y^n | Z^n))  $ \cite{MAU97}, can be written as  $R \leq 1 - \rho$ 
because $ \max(\bI(X^n, Y^n), \bI(X^n, Y^n | Z^n)) =1 - \rho $.    
%This implies that the rate of key agreement protocol over AWTP-PD channel is same as
Note that the rate in Maurer's  bound does not consider communication over PD, while in our setting the rate includes
all the communication.

%For key agreement protocol with weak reliability over AWTP-PD channel, the key agreement protocol can have a higher rate 
Using weak reliability, the bound on the secret key rate  is $R_\sk \leq 1 - \rho_r$. 
 Note that  in our setting the shared dependent variables  $X,Y,Z$ are influenced by Eve through 
the choices of $S_r$ and $S_w$ and the added noise.  Although   Maurer's general setting \cite{MAU93}  allows for the distribution  $P(X,Y,Z)$
to be adversarially influenced, but  in the case of multiple instances of triplets $(X,Y,Z)$ received through noisy channels from a single randomness source,
they are not influenced by the adversary.

\vspace{1mm}
\noindent
{\em Constructions.}
We give two constructions of secure key agreement protocols,
 with strong and weak reliability.
 
The first construction is an efficient three round protocol that achieves the rate $1-\rho$ when $\rho_r+2 \rho_w < 1 - \rho$ when strong reliability is considered.
The second protocol is  an efficient one round protocol  that achieves  the rate $1-\rho_r$, but only for weak reliability.
Both protocols have constant size alphabet.
\remove{

Public discussion channels had  been considered in both wiretap  and SMT models.
In wiretap setting it was shown \cite{M92,AC93,MAU97} that a public discussion channel  substantially expands 
the range of scenarios in which secure communication is possible.  In particular  
secure communication becomes possible even if the eavesdropper channel is less noisy than the
main channel. Public discussion channel can be realized by using a message authentication protocol in the manual channel model \cite{NSS06}. %, or by invoking almost-everywhere broadcast protocol.
The communication complexity of PD will be kept low.

 {\bl
The key agreement protocol are considered over AWTP channel and AWTP-PD channel. For key agreement protocol with strong reliability, a three round protocol over $\Sigma$ that  asymptotically achieves the  bound over both AWTP and AWTP-PD channels. For key agreement protocol with weak reliability, there is one-round key agreement protocol over $\Sigma$ that achieves the bound over both AWTP and AWTP-PD channels. All the key agreement protocols are  
efficient (polynomial time) for the sender and receiver.
}

}

\subsection{Motivation and applications}

Considering adversaries at physical layer of communication  %his is a 
gives a realistic model of adversaries in wireless communication systems,
evidenced by growing  research in physical layer security  %has become a very active research area 
 \cite{ALCP09,BS13,MBL09,PTDC11} in recent years.  This research however 
 has been primarily in networking and engineering communities with emphasis  %of these works however, is 
 on 
tools and techniques, such as modulation techniques, multiple input, multiple output antennas \cite{TV05,MS11,BB11} and signal design, to  achieve  security 
goals that  is  ``informally" stated.

Our adversary model is motivated by physical layer adversaries in wireless communication, where 
entities interact  with its neighbour over a channel that can be ``partially controlled" by the adversary.
However our formulation of the problem can also be used to model networks that are partially controlled by an adversary.
The network between Alice and Bob can be  modelled by node disjoint paths between them.
The adversary
selects  two subsets of paths (possibly overlapping), some for eavesdropping  and some for tampering.
The goal  is to  for Alice and Bob to share a secret key.
A similar model of network is considered in SMT \cite{DDWY93,FW00,FFGV07,KS08} problem.
In its full generality and when $S_r \neq S_w$, the tampering is algebraic and by adding a noise vector on the set
$S_w$. However for $S_r = S_w$ (and more generally any component that is both read and written to),
 the tampering will be arbitrary as Eve can determine the noise component $z_i= x^2_i - x^1_i$, where $x^2_i $ and $x^1_i$ are the new and old (read) value of the component.

Using physical  layer properties of the system for secure communication has the interesting intuition that
massive surveillance would translate to the requirement of everywhere physical presence which would
significantly raise the bar for successful surveillance.
  \remove{??? Generating keys using physical layer location properties \cite{}
has been widely studied \cite{M92,MAU97,KR09}.....[WRITE]
??

If one can make Eve to have to have physical presence everywhere, 
the task of learning all communications will become much harder. This will vastly reduce Eve's massive surveillance capabilities.
% to ha systems individual, 
%the threat of mass surveillance and attack will be vastly reduced. 
%
Physical layer security has unique characteristics that can substantially contribute to the overall strengthening  of security.
One of the main components of cryptosystems is  %shared 
key generation. This process  in practice 
% Key generation in say SSL, 
use true random number generator of the underlying operating system such as Linux or Windows.
%In Universal Algorithm Substitution,  
Possibility of algorithm substitution at different stages of system development, and threat of "bugs" such as hardware Trojans,
suggests the need for randomness generators other than the ones offered by the underlying OS.
% massively  distributed algorithms such as linux RNG  cannot be trusted. 
One can always use an external TRNG, for example a Quantum RNG \cite{} or a Lava Lamp \cite{}. However these solutions are impractical in the ever increasing mobility and reliance on small devices.
}
This paper effectively shows that under the reasonable assumption that    adversary cannot fully  access the  physical layer 
communication system (channel or network), 
Alice and Bob can generate shared randomness, with provable security against a computationally unlimited adversary, and are able to securely communicate.
This provides an interesting new research direction for  using physical layer security as a source of individualization and diversification in security systems with the goal of improved security against massive surveillance.
\remove{
Physical layer security can be used to move the cryptosystems towards more individulaization and introduces a
rich source of diversity that 
%has the unique characteristics that moves the cryptosyetem 
can substantially contribute to the overall strengthening  of security,
by considering individuals' physical characteristics of the system.
}
A well-known approach to increasing   security against massive blanket  attacks in computer systems is
using diversification and individualization of software and hardware systems.  Diversification has been
successfully used by computer virus writers  to avoid detection by making  each copy distinct \cite{PETER05}. It has also been
used by security 
system designers, for example by using multiple operating systems and protection software,  to protect against 
mass infections.
In cryptography, massive surveillance through techniques such as algorithm substitution and backdoors, has motivated new research \cite{YY97,BPR14}.
% into  
%individualization and diversity of security technologies \cite{}.  
An important source of diversity in secure communication is physical layer properties of the communication systems.
%for providing security.  
To thwart systems that use physical layer properties of the system as a resource,
% effectively requires that 
the adversary must exert higher level of control over the  physically environment which  would be significantly more intrusive, visible and
demanding  on the adversary.
Our work is an step in this direction of exploring this resource for security against massive surveillance.

%%%%%%%%%%%%%%%%%%%%%%%%%%%%%
\subsection{Relation to previous work}
Maurer \cite{M92} considered a setup where Alice, Bob and Eve have correlated variables $X$, $Y$, and $Z$, distributed as $\bPr_{XYZ}$.
Alice and Bob want to share a shared secret key by exchanging messages over an authenticated insecure channel that is  fully readable by the adversary. 
He derived an upperbound on the entropy of the key that can be obtained from such a protocol,
\begin{equation} \label{eq1}
\bH(K) \leq  \min \{\bI(X;Y), \bI(X:Y|Z)\} +\bH(K,K')+ \bI(K;C^tZ)              
\end{equation}

To obtain more concrete results, he considered a scenario where a discrete memoryless channel generates sequences $X^n= (X_1\cdots X_n )$,   $Y^n=(Y_1\cdots Y_n )$, and $Z^n=(Z_1\cdots Z_n )$.
{\em  In such a setting,  rate of secret key agreement} is introduced \cite{M92}.  This is the maximum rate at which Alice and Bob can agree on a secret key, while the rate at which Eve obtains information is arbitrarily small. 
This definition was later \cite{MAU97} strengthened by requiring that the total leakage of the key approaches  zero, when the communication length $n$ over the channel approaches infinity.
They extend a  lower bound on the rate of secret key agreement in \cite{MAU97} to the key rate with strong secrecy. The bound states,
\begin{equation} \label{eq2}
R \geq \bI(X;Y) -\min \{ \bI(Z;X), \bI(Z;Y)\}  		
\end{equation}
In this model %In calculating rate, 
 communication over public discussion channel is assumed free.

\remove{
he channel provides authenticated communication: it is argued that message corruption can be detected if one assumes a short secret key between the sender and the receiver. The work thus shows that when Eve does not have a perfect knowledge of the communication, a short key can be extended to an arbitrarily long key.
}

\vspace{1mm}
\noindent
{\bf NOT Authenticated channels. }
In \cite{MAU97}  Maurer removed the assumption that the channel connecting Alice and Bob is authenticated.  He allowed the adversary to be able to control this channel and completely control the messages that are sent over it. The goal of the protocol is  to establish a shared key that is perfectly unknown to the adversary. This means that this goal could be achieved while some tampered communication remains undetected. 
Maurer considered two cases: (1). Alice, Bob and Eve do not share the initial information. (2). Alice, Bob and Eve share the initial information $X, Y, Z$ with distribution $\bPr_{XYZ}$.  Theorem 1 \cite{MAU97} shows that  in case (1) no secret key can be established. This intuitively says that if there is no initial common randomness and communications are completely tamperable, then no secret shared key can be expected. The theorem further shows  impossibility 
of key agreement even when all messages are authenticated (but public), or communication is authentic in one direction, and secret in the opposite direction. So without initial shared randomness, even if no tampering and only  one direction  visible by the adversary, one cannot expect a shared secret key. 
Thus initial setup is necessary for any information theoretic  secret key agreement protocol.

Maurer defines $X$-simulatable ($Y$-simulatable) and shows impossibility of key agreement if  $\bPr_{XYZ}$ is $X$-simulatable ($Y$-simulatable).
When %He considers the case that 
 the triplet $X^n$,  $Y^n$,  and $Z^n$  is generated by many applications of the same experiment, % and has ajoint product distribution,  
the secret key rate $S^*(\bPr_{XYZ})$ can be defined. 
A %He 
%proved the
surprising result of the paper is that  $S^*(\bPr_{XYZ})=0$, or $S^*(\bPr_{XYZ}) = S (\bPr_{XYZ})$, 
 and this distinction is based on or $ \bPr_{XYZ}$ is $X$-simulatable (or $Y$-simulatable ) or not.
 
That is if $\bPr_{XYZ}$ is not $X$-simulatable ($Y$-simulatable), 
using  non-authenticated communication gives the same rate  as using authenticated communication.

\vspace{1mm}

\noindent
{\bf $(\rho_r,\rho_w)$-Correlation. } 
To compare our results with the above models,  we consider a
distribution $\bPr_{X^n Y^n Z^n }$ that is generated by Alice sending $X^n= (X_1\cdots X_n)$ over a \rwc~ to Bob. Here $Y^n$ and $Z^n$ denote
the observations of Bob and Eve, under the restrictions imposed by the channel. We refer to such correlated variables, as $(\rho_r,\rho_w)$-triplet of length $n$.
Maurer notes  \cite{MAU97}  ``In  general the distribution $\bPr_{XYZ}$ may be under Eve partial control, and may only partly be known to Alice and Bob." %  Although he assumes $\bPr_{XYZ}$ is known by all parties.  
$(\rho_r,\rho_w)$-correlation generates $\bPr_{XYZ}$ under the influence of Eve.

We consider this initial  correlation in, (i) Maurer's setting of  \cite{M92} where communication is over a PD, and (ii)  Maurer setting of \cite{MAU97} where the channel is fully controlled by the adversary. In 
{\em $(\rho_r,\rho_w)$-correlation in setting of  \cite{M92}}, $\bPr_{X^nY^nZ^n}$ is  a $(\rho_r,\rho_w)$-correlation of length $n$, with interaction over PD and using strong reliability definition.
Maurer upper bound  for a key agreement protocol with security $\bI(K; Z) \leq \epsilon$ and reliability bounds $\delta$  is,
\[ \bH(S)  \leq  \min \{\bI(X;Y), \bI(X:Y|Z)\} + h(\epsilon)+\epsilon( |{\cal S}|-1) .\]

The key rate of a protocol  in Maurer's  setting however only considers the number $n$ of instances of the triplets $(X,Y,Z)$
shared by parties, and assumes free communication over public channel. That is the rate in fact is the expected amount of entropy that
can be derived from each instance.

The secret  key   in our setting however is the result of interaction over a {\em  partially} observed channel, and 
takes the total number of channel uses into account.

 The two bounds   give  the same result. 
 A  partially adversarially controlled channel can be used to generate secret keys at the same rate
  as  key agreement protocols in a setting of initially shared dependent vector variables, and having access  to a
  PD channel. 
  This means that the  requirement of the existence of PD can be 
  replaced with   channels that are partially controlled by the adversary, without incurring rate loss.
  
 % will relax the primitive of key agreement protocol, and allow us to design protocols for practical usage.

\noindent
{\em $(\rho_r,\rho_w)$-correlation in setting of  \cite{MAU97}.} 
Again $\bPr_{X^nY^nZ^n}$ is $(\rho_r,\rho_w)$-correlation, with interaction over a network layer adversary, and using weak definition of reliability. The question that one can ask is, if the 
distribution  $ \bPr_{X^nY^nZ^n}$ is $X$-simulatable ($Y$-simulatable),

\remove{
{\color{red} if we can show that it is simulatable, then we cannot have any secret key when interaction is 
over a corrupted channel. 
We have shown that if communication is over PD, or \rwc~, we have secret key rate that can be achieved
}

If we assume a public channel for communication, we can construct a prorocol
The setting that we considered was key agreement with no initial condition,  and 
 using a partially visible and (additively) corrupted channel, for communication.
 We defined secret key rate, and showed an upper bound on the rate $1 - \rho$.
 
This scenario appears different from Maurer's cases. However in the following we show how the two can be related. For key agreement protocol, the upperbound and lowerbound of key agreement protocol can be derived directly using the bound on $S(\bPr_{X,Y,Z})$ \cite{MAU97}. 

\begin{enumerate}
\item  $\bPr_{XYZ}$   corresponds to a $(\rho_r,\rho_w)$-correlated triplet  of length $n$. 
\item  Communication is over an authenticated channel.
\end{enumerate}

For upper bound, one should use (\ref{eq1}) above, with the rate interpretation.
If the result is a bound on the rate of the shared key of the form $1 - \rho$, then  we have shown that with a correlation of this form,  authenticated communication gives a rate at most $1 - \rho$.

This  will be different from our results: we start from no initial condition and  use the partially visible, partially (additively) corrupted channel, throughout. If we assume the first round in our case corresponds to ``initialization" (or correlation building), their result corresponds to communication over authenticated channels, and ours a $(\rho_r, \rho_w)$-channel.

For lower bound on $S(\bPr_{XYZ})$ where $X, Y, Z$ is a $(\rho_r,\rho_w)$-correlated triplet  of length $n$,  under the same conditions (communication over authenticated channels),  we can use (\ref{eq2}) to give the lower bound for  key agreement protocol. 

Although this lower bound will not help with  AWTP channel because their lowerbound requires authenticated channel.

\vspace{1mm}
{\bf No Shared Variable. }
We consider a setting that there is not shared (correlated) variable: Alice and Bob have their local sources of randomness and are connected by an adversarially controlled channel. Assuming computational bound on the adversary,  Diffie and Hellman seminal protocol \cite{DH76}, achieves the shared key assuming the channel between Alice and Bob is authenticated. Removing this assumption and replacing it with setup assumptions allows constructions of authenticated key agreement protocols.
%The adversary here is a Dolev-Yao ...??

Complete access  and control of  messages  models  network level adversaries, for example when Eve controls routers in the system. If Alice and Bob are directly linked through a wireless link (e.g. ad hoc communication mode), the communication may not 
 go through the router.  
The adversary here  can access the physical layer communication and partially ``read" and ``add" noise to the communication. 
Without any computational assumption, and assuming unlimited access to the communication channel, it is impossible to share a random string \cite{DES11}.
A reasonable assumption is to assume the adversary can see   a fraction $S_r$ (arbitrary set, chosen by Eve)  of the communication in each round, and add noise to a fraction $S_w$ (arbitrary set, chosen by Eve) of the communication.
The model is reasonable when Eve wants to stay covert and so cannot use heavy intercepting equipment, or Alice and Bob artificially introduce noise in the environment \cite{TY08} to make Eve's view partial.

}

\vspace{1mm}
\noindent
{\bf Adversary model.}
The adversarial model  for physical layer adversaries was first proposed in \cite{PS13}. The goal of 
the protocol however was to provide  reliable communication. Secure communication using the same adversary model
was studied  in \cite{WSN14,WSN141}. 
We use the same adversary model and refer to it as %\rwc~, or 
\rw- Adversarial Wiretap Channel (\rw-AWTP).
A secure message transmission protocol  
can be used to send a random key and establish a shared key.
The secrecy and reliability definition in \cite{WSN14,WSN141} ensure that the received random
string satisfies security properties of definition \ref{def_keyagree}  and so
can be used as a key.  It was proved that the secrecy rate $R_\sfM$ of a message transmission protocol (the highest  number of
bits per channel use, where transmission has perfect secrecy and reliability approaches perfect reliability with increased 
message length) is bounded by $R_\sfM\leq 1-\rho_r-\rho_w$.   This means that  no secure message
transmission is possible if $\rho_r=\rho_w=0.5$, that is $S_r$ and $S_w$ are 
chosen each to be half of the codeword, even if $S_r=S_w$ and half of the components of each round are left untouched.
The results of this paper shows that  secure communication is possible if $S_r=S_w$  and each set is  as large as $(1-\nu)n $, where $\nu$ is a negligible constant.

The adversary corrupts the codeword  by adding a noise vector,  with non-zero element over $S_w$, to the codeword.
If the adversary chooses $S_r=S_w$, then they know the component that is corrupted (because it is in $S_r$) and so
can design the noise to change the component to any desired value. That is arbitrarily change the 
component.
Our work shows possibility of secure communication  if  $(1-\nu)n $ components are arbitrarily corrupted.

\vspace{1mm}
\noindent
{\bf Other works.}
Key agreement is a fundamental problem with a very large body of research.
Directly related work on secure key agreement can be grouped into those  that assume 
a shared partly leaked string \cite{DKRS06,DORS08,KR09,DW09} at the start,  and those that do not assume a shared string, but assume
a close or highly correlated variable \cite{M92,MAU97,RW04,MW96,CFH15}.
In each group, communication can be over PD, or a tamperable channel.

Renner {\it et al.} consider the general setting of  key agreement protocol between Alice and Bob with variables,  $X$ and $Y$,
 that are similar but not identical, while Eve's information about $X$ and $Y$ is incomplete, with communication over  completely insecure channel.
 They  find bound and propose a protocol.
%  They also consider the construction of key agreement protocol over an interactive and insecure channel. 
Kanukurthi {\it et al.} \cite{KR09} propose an efficient  key agreement protocol in the same setting.

This problem has also been considered under robust fuzzy extractors \cite{DORS08} and \cite{DKRS06}  for also the case that
parties may have a long-term small secret key.

Extracting secret and shared randomness (key) when parties have weak, partially leaked secret
has been studied in \cite{BBCM95}. 
Maurer {\it et al.} consider  privacy amplification against passive and active adversaries with
 incomplete information about a shared string between two parties. 

Dodis {\it et al.}\cite{DW09}  give a two round protocol that  optimally extracts the randomness in a 
shared string of length $n$, by interacting over a channel that is controlled by the adversary.
\remove{
consider the key agreement protocol that Alice and Bob share
an n-bit secret $x$, which might not be uniformly random,
but the adversary has at least $k$ bits of uncertainty about it. They propose an authenticated key agreement protocol in which Alice and Bob use $x$ to agree on a
nearly uniform key $R$, by communicating over a public channel controlled by an active adversary Eve. 

Yevgeniy Dodis and Daniel Wichs. Non-malleable extractors and symmetric key cryptography
from weak secrets. 
}

 The work in \cite{CFH15}  gives characterization of distributions $\bPr_{XYZ}$  in
different  communication setting such as when there is no communication, there is a one-way communication and there
is a helper.

 \section{Definitions of Key Agreement Protocol}

\subsection{Channel Models}

Let $n$ be the length of codeword, $[n]=\{1,\cdots, n\}$. We denote set $S_r= \{i_1,\cdots, i_{\rho_r}
n\} \subseteq [n]$ and $S_w= \{j_1,\cdots, j_{\rho_w}n\} 
\subseteq [n]$ be the two subsets of the $n$ coordinates. Let $\mathsf{SUPP}(x)$ of vector $x\in \Sigma^n$ be the 
set of positions in which the component $x_i$ is non-zero.

\begin{definition}\label{def_1awtp}  A (($\rho_r, \rho_w$)-Adversarial Wiretap 
Channel(($\rho_r, \rho_w$)-AWTP))
is an adversarially corrupted communication channel 
between Alice and Bob such that it is (partially) controlled 
by an adversary Eve, with two capabilities: Reading and 
Writing. For a codeword of length $n$, Eve can do the 
following:
\begin{enumerate}
\item Reading (eavesdropping): Adversary selects a subset $S^r \subseteq 
[n]$ of size $|S^r| \leq \rho_rn$ and reads the components of 
the sent codeword $c$ on $S^r$. 
\item Writing (modifying): Adversary chooses a subset $S^w \subseteq 
[n]$ of size $|S^w| \leq \rho_wn$ for writing, and adds to $c$ 
an error vector $e$ with $\mathsf{SUPP}(e) = S^w$. 
\end{enumerate}
\end{definition}

For each channel,  the subset $S=S^r\cup S^w$ with size $|S| \leq
\rho n$ is  the set of codeword components   that 
are  either read   or write to, by the adversary.
%In each round of AWTP transmission, 
%We assume that the reading parameter $\rho_r$, writing parameter $\rho_w$, and reading or writing parameter $\rho$, are fixed. 
We assume the adversary is adaptive and   selects
components of the codeword  for reading and writing one by one, using its current view of the communication,
% and also 
%from his previous corruption, at each step using 
%his knowledge of the codeword at that time.
 In each communication round, 
the two   subsets $S^r$ and $S^w$, chosen by Eve,  may be different but will satisfy the bounds    $|S^r|\leq \rho_rn$,    $|S^w|\leq \rho_wn$ and $|S|\leq 
\rho n$.

A key agreement protocol is an interactive protocol that uses 
% we allow  the 
%communication over two  
the $(\rho_r,\rho_w)$-AWTP channel  is two 
directions, 
%The transmission over AWTP channel can either 
from Alice to Bob, or form Bob to Alice. 

%We consider two types of channels that connecting between Alice and Bob: One channel is called {\it public discussion channel}. The adversary can read but cannot tamper the communication on public discussion channel. 
%The second channel is called {\it adversarial wiretap channel}. The adversary can partially read and tamper the communication on the adversarial wiretap channel.

To compare our results with previous ones, we also consider  {\it public discussion channels}. 
\begin{definition}\label{def_pd} A (Public Discussion Channel(PD)) is an %insecure and 
authenticated channel between Alice and Bob, that can be 
read by everyone including Eve. The adversary's reading capability is $S_r = [n]$, while the writing capability is $S_w = 0$. 
\end{definition}

\subsection{Key Agreement Protocols}

%We consider two key agreement protocols: The first key agreement protocol is over one-way AWTP channel and interactive PD channel. The second key agreement protocol is over interactive AWTP channel. 

We study key agreement protocols over \rw-AWTP channels. We consider the case that the channel is the only resource, and 
interaction over this channel generates the key.

%\vspace{3mm}
%{\bf  Interactive Key Agreement Protocol overAWTP Channel.}

\subsubsection{ Interactive Key Agreement Protocols over \rw-AWTP Channels.} \label{interactive}
There are a pair of   forward    and  backward  \rw-AWTP channels, from Alice to Bob and  Bob to Alice, respectively (Figure \ref{fig_keyagreeawtp}). To establish a secret key, Alice and Bob follow the %key agreement 
an $\ell$-round key agreement protocol, sending coded messages over the two channels. %in  communication rounds. 
%In each round of  key agreement protocol, Alice sends information to Bob, or Bob sends information to Alice over AWTP channel. 
%
  The protocol is defined by a sequence of randomized   function pairs $ (\Pi_A^r,\Pi_B^r)$ for $r = 1, \cdots, \ell$,  
  and a pair of deterministic key derivation functions $(\Phi_A, \Phi_B)$. 
   Each protocol function outputs a vector over alphabet symbols $\Sigma$.
  In the $i^{th}$ round, Alice transmits the protocol message  $c_i$ to Bob, or Bob transmits the protocol message $d_i$ to Alice. 
 %In each round of communication, 
 Eve reads and writes to the  channel.
In the $i^{th}$ round, Eve reads on the set $S^r_{i}$ and adds error on the set $S^w_{i}$, and the sizes of $S^r_{i}$ and $S^w_{i}$ are  bounded  as
 $|S^r_{i}| \leq \rho_rn_{i}$, $|S^w_{i}|\leq \rho_wn_{i}$, respectively. 
At the end of the $i^{th}$ round, Bob (or Alice) receives a corrupted word $x_{i}$  (or $y_{i}$).

  Let $r_A$ and $r_B$ denote the randomness of Alice and Bob, and $ v_A^{i}$ and $v_B^{i}$ denote  the  views of Alice and Bob, 
  respectively. The view of Alice $v_A^i$ consists of all messages  received and sent by her at the end of round $i-1$,

\begin{equation}
c_{i}=\Pi_A(r_A, v_A^{i})   \mbox{\qquad and \qquad } d_{i}=\Pi_B(r_B, v_B^{i}).
\end{equation}

\begin{figure}[h]
\caption{Key Agreement Protocol over Interactive AWTP Channel}
\centering
%\vspace{3mm}
\includegraphics[width=0.5\textwidth]{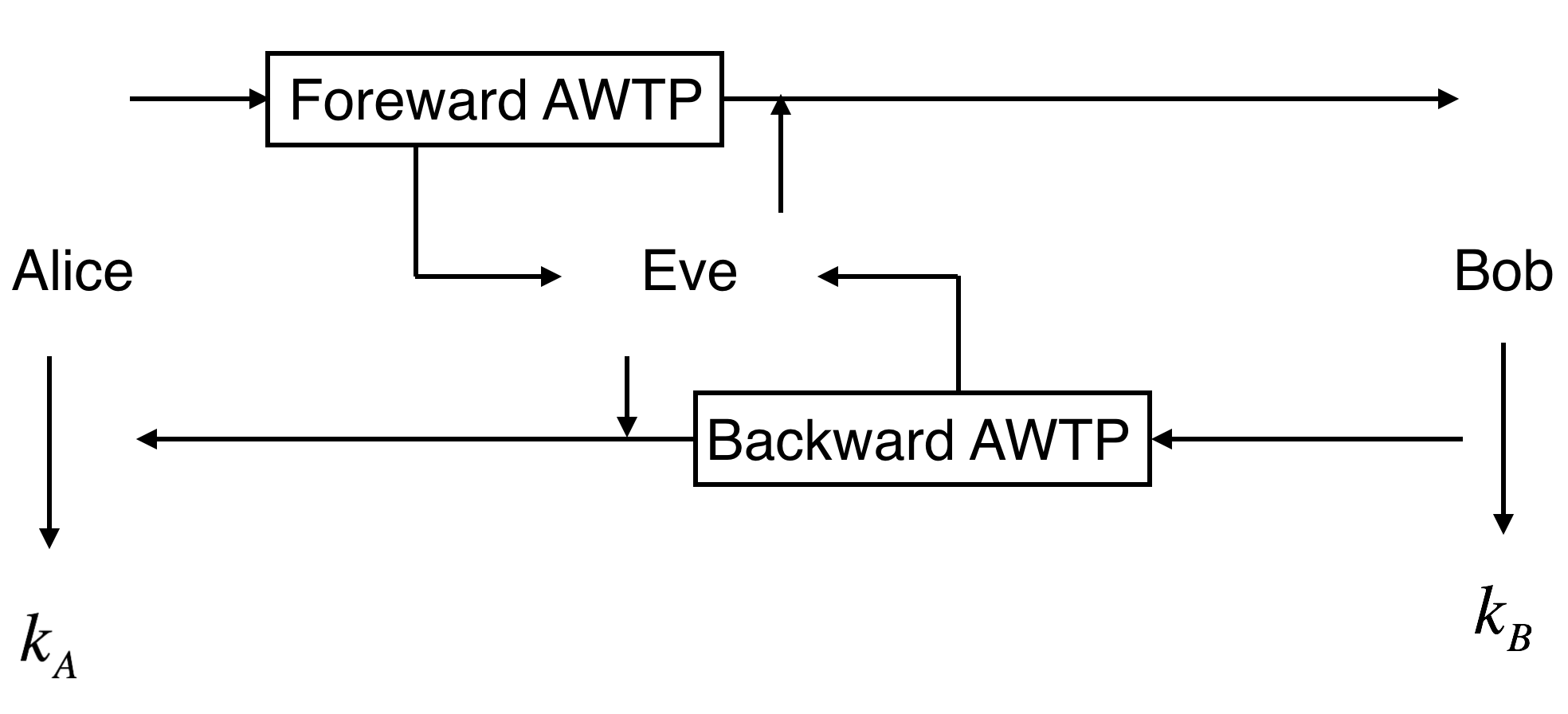}
\label{fig_keyagreeawtp}
\end{figure}

%%%%%%%%%%%%%%%%%%%%%%%%%%%

At the end of the $\ell^{th}$ round, Alice and Bob  generate the   keys $k_A$ and $k_B$,
using their sent and received  messages that form  their views of the protocol,
% based on the communications of protocol. 
\begin{equation}\label{rel-str}
\Phi_A(v_A^{\ell})=k_A \mbox{\qquad   and  \qquad} \Phi_B(v_B^\ell)=k_B.
\end{equation}
The key derivation algorithm is deterministic.
Since there is no initial dependent variables held by %, and  between the randomness of 
Alice and Bob,  the key will only depend on the communication transcripts. %and will not contribute to the key generation process. 

We also consider a second case that the key derivation function outputs, either a key or  detects an error and outputs $\perp$.
That is,
\begin{equation} \label{rel-perp}
\Phi_A(v_A^{\ell})=k_A \mbox{\qquad   and  \qquad} \Phi_B(v_B^\ell)=k_B  
\end{equation}

%%%%%%%%%%%%%%%%%%security and reliability%%%%%%%%%%%%%%%%%%

{\bf Security and Efficiency.}  
The protocol has $\ell$ rounds.
%The total number of communication  rounds of the 
%key agreement protocol is $\ell$.  
%In  each round, the length of the codeword transmitted by Alice or Bob over the \rw-AWTP 
%or PD channel
 %s $n_{i}$ protocol symbols,  where $i = 1, \cdots, \ell$.
In the $i^{th}$ round of communication, %key agreement protocol, 
 the length of the protocol message  %over AWTP channel
  is $n_{i}$  protocol alphabet,  where $i = 1, \cdots, \ell$. 
 The   total length of communication is $n=\sum_{i=1}^\ell n_{i}$.

The following gives  correctness, security and reliability definitions  of the key agreement protocol. 

\begin{definition}\label{def_keyagree}
{\em $(\epsilon, \delta)$-Secure Key 
Agreement  $($$(\epsilon, \delta)$-$\mathsf{SKA}$$)$ Protocol: }\label{def_keyagreepub}
%Alice, Bob and Eve that are connected by a \rw-AWTP forward and backward 
%channels. The 
An $(\epsilon, \delta)$-key agreement protocol  satisfies the following properties:

\begin{enumerate}

\item Correctness: If Eve is passive, that is $S_w = \emptyset$, then $\bPr(K_A = K_B) = 1$.

\item Secrecy: Let $U$ be  a uniformly distributed variable over $
\cal K$. For any adversary  view $Z$, the statistical 
distance between the distribution of key and $U$ is 
bounded by $\epsilon$. That is,

\begin{equation}\label{eq_sec}
\SD(P_{K|Z}, U)\leq \epsilon \qquad
\end{equation}

\item Strong Reliability: If the protocol key derivation function follows (\ref{rel-str}),  The probability that Alice or Bob output  different keys  is bounded by $\delta$. That is,
 Alice and Bob output 
a common key  with  probability  at least $1-\delta$,  
\begin{equation}\label{eq_rel}
\bPr(K=K_A=K_B)\geq 1-\delta
\end{equation}
\item Weak Reliability: 
If the protocol key derivation function follows (\ref{rel-perp}),  Eve wins is  if $K_A \neq \perp, K_B \neq \perp, K_A \neq K_B$. That is, the probability that Alice and Bob either output error, or output the correct key, is at least $1 - \delta$,
\begin{equation}\label{eq_wrel}
\bPr(K_A = \perp, \mbox{or } K_B = \perp, \mbox{or } K = K_A = K_B )\geq 1-\delta
\end{equation}

\end{enumerate}
\end{definition}

\vspace{3mm}

The key agreement protocol is {\em perfectly secure} if $
\epsilon=0$, and {\em perfectly reliable} if $\delta=0$. 

The transmission efficiency of a key agreement protocol   is 
measured by the  {\em secret rate $R_{\sk}$ } which is the rate at which a secret key is agreed between Alice and Bob,
assuming  the adversary  uses their best possible adversarial strategies.
 For a protocol with total transcript length $n$ the secret key rate is given by, $\frac{\log|\cal K|}{n\log |\Sigma|}$.

\remove{
We also consider a {\em protocol family} $\{ \{\Pi^u_A, \Pi^u_R\}, u \in \mathbb{N}\} $ where $\{\Pi^u_A, \Pi^u_R\}$ is a protocol for which
$|{\cal K}| =u$.
}

The rate of key agreement 
protocol is the maximum rate of key that Alice and 
Bob can generate by communicating over AWTP channel. A key agreement  is parameterd with the total length $n$ of protocol. 

\vspace{3mm}
\begin{definition}\label{def_keyagreefamily}
The rate $R_\sk$ of key agreement protocol is achievable for protocol $\mathsf{SK}=\{\Pi_A^r, \Pi_R^r\}$ for $r = 1, \cdots, \ell$, if for any $\xi>0$, there exist $n_0$ such 
that for any $n\geq n_0$, there is 
\[
\frac{\log|\cal K|}{n\log |\Sigma|}\geq R_\sk-\xi
\]
and,
\[
\delta<\xi
\] 
\end{definition}
\vspace{3mm}

 The $\epsilon$-secret key capacity $\mathsf{C}_\sk^\epsilon$ of key agreement protocol 
 is the largest achievable rate of all key agreement 
protocol over the channel with $\epsilon$-secrecy. The 
perfect secret key capacity ${\sfC}_\sk^0$ is the 
largest achievable secret key rate of all key agreement protocols over 
the channel. 
% with perfect secrecy.

The computational efficiency of a key agreement protocol 
refers to the computational complexity of  the protocol algorithms run by Alice and 
Bob. The key agreement protocol is efficient if both parties 
computations are  polynomial time. Otherwise, the key 
agreement protocol is inefficient.

%{\bf Key Agreement Protocol over AWTP-PD Channel}
\subsubsection{Key Agreement Protocol over AWTP-PD Channels}

 To better compare and contrast our results with the known results in key agreement, and in particular 
 Maurer's setting in \cite{M92,MAU93},  we 
will consider a setting where Alice and Bob have access  to a one-way \rw-AWTP channel,
 and a  two-way PD channel.  They will use the \rw-AWTP channel to establish dependent and partially 
 leaked variables, and then use the PD channel to extract the entropy captured in the established dependent variables.

There is a one-way AWTP channel from Alice to Bob, and two-way PD channels between Alice and Bob (Figure \ref{fig_keyagreeawtppd}). To establish a secret key, Alice and Bob follow an $\ell$ round  key agreement protocol. In the first round of  the key agreement protocol, Alice sends a sequence of variables $X^n$ to Bob over  the \rw-AWTP channel.
 In the following rounds, Alice and Bob communicate %with each other 
 over the PD channel.

\begin{figure}[h]
\caption{Key Agreement Protocol over AWTP-PD Channel}
\centering
\vspace{3mm}
\includegraphics[width=0.5\textwidth]{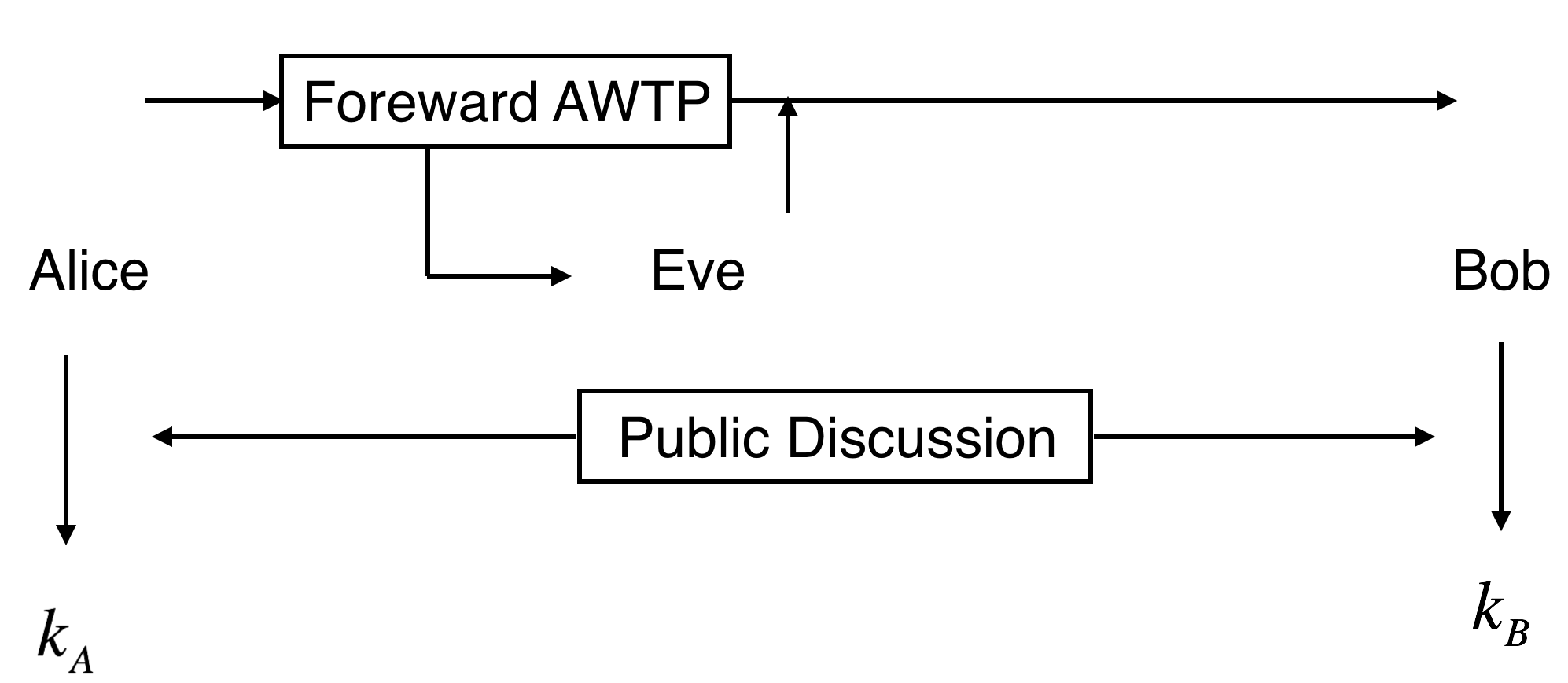}
\label{fig_keyagreeawtppd}
\end{figure}

Description of the protocol messages and key derivation functions, and the  definitions of correctness, security and reliability are the same
as the interactive case above. The only difference  is that the \rw-AWTP channel is used only in the first round.
\remove{
As before,
the protocol is defined by a sequence of randomized function pairs $(\Pi_A,\Pi_B)$, and a pair of deterministic key derivation functions $(\Phi_A, \Phi_B)$. In the $i^{th}$ round, Alice transmits the  protocol message  $c_i$ to Bob, or Bob transmits the   protocol message $d_i$ to Alice. Let $r_A$ and $r_B$ be the randomness of Alice and Bob, and $v_A^{i}$ and $v_B^{i}$ be Alice and Bob   views of at the end of communication 
round $i-1$.

\begin{equation}
c_{i}=\Pi_A(r_A, v_A^{i})   \mbox{\qquad and \qquad } d_{i}=\Pi_B(r_B, v_B^{i})
\end{equation}
}

%Eve's read and write capabilities over AWTP channel in the first round of  the key agreement protocol, is the same as defined for 
%interactive case.
In the first round of  the protocol, Eve reads on the set $S_{r}$, and adds error on the set $S_{w}$ of  the \rw-AWTP channel %, with 
and we have %the size of $S_{r}$ and $S_{w}$ bounded by 
$|S_{r}| \leq \rho_rn_1$, $|S_{w}|\leq \rho_wn_{1}$. 
At  the end of the first round, Bob receives a corrupted word $x_{1}$,  and Eve has the selected partial view given
by  $z_1$. % over AWTP channel. 
In the following rounds, communication is
over PD and is fully accessible to Eve.
% can only read over PD channel in the following round.
 This means that in the $i^{th}$ round of  the %key agreement 
 protocol with $i \geq 2$, Eve read and write sets are   $S_r = [n_i]$ and %writing on set 
 $S_w = \emptyset$, with $|S_r| \leq n_i$ and $|S_w| = 0$.
That is at the end of $i^{th}$ round, $i\geq 2$, Bob or Alice correctly  receives the sent codeword. % as Alice or Bob sends over PD channel. 

%%%%%%%%%%rate of key agreement%%%%%%%%%%%

\section{Rate Bounds} % Key Agreement Protocol}
\label{SecRate}

 We derive upper bounds on the secret key rate of  key agreement protocols over \rw-AWTP channel.

\subsection{Interactive key agreement}
 This is the main setting considered in this paper. The only resource available to the adversary
is the channel that is partially controlled by the adversary.
We consider % first give the rate of key agreement protocol with
 strong reliability. 

\begin{theorem}\label{the_boundkey}
The upper bound on the secret key rate of an  interactive key agreement protocol  over a \rw-AWTP channel, 
 is $R_{\sk}\leq 1-\rho$.  The bound is for  strong reliability.
\end{theorem}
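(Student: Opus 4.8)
The plan is to show that at most a $(1-\rho)$-fraction of the transmitted symbols can carry any secret-key entropy, and that interaction does not help circumvent this. Fix a protocol with transcript length $n$ and let $\rho$ be the fraction of symbols that Eve reads-or-writes in each round. First I would consider the adversarial strategy in which Eve picks, in \emph{every} round, a read set $S^r_i$ and a write set $S^w_i$ with $|S^r_i \cup S^w_i| = \rho n_i$, and simply reads those positions (adding zero noise, so the protocol still runs as in the passive/correct case, and in particular $K_A = K_B = K$ by strong reliability up to probability $\delta$). Under this strategy Eve's view $Z$ consists of the $\rho n$ read positions of the honest transcript $C$ (plus her own coins, which are independent). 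The key point is a conditioning argument: given $Z$, the only remaining uncertainty in the transcript is carried by the at most $(1-\rho)n$ positions Eve did not touch, so $\bH(C \mid Z) \le (1-\rho)\, n \log|\Sigma|$.

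Next I would bound $\bH(K)$ in terms of $\bH(C\mid Z)$. Since there is no initial correlated randomness, the keys are deterministic functions of the parties' views, which are themselves determined by the transcript and the local coins; the secrecy condition $\SD(P_{K|Z}, U) \le \epsilon$ forces $\bH(K\mid Z)$ to be within $o(\log|\cK|)$ of $\log|\cK|$ (via Fano-type / statistical-distance-to-entropy estimates, e.g. $\bH(K\mid Z)\ge \log|\cK| - h(\epsilon) - \epsilon\log|\cK|$). On the other hand, $K$ is a function of $C$ and the honest coins $r_A,r_B$, but a standard fooling/re-randomization observation shows the coins cannot add secret entropy beyond what survives in the transcript against this particular Eve: if the coins materially influenced $K$ in a way invisible to Eve, they would have to be reflected in the untouched transcript positions or else the two parties could not agree. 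Making this precise is the crux — I would argue that $K$ is (up to the $\delta$ reliability slack) a function of $C$ alone once we fix Eve's choices, because Bob computes $K_B$ from his view which, on the untampered channel, equals $C$, so $\bH(K\mid Z) \le \bH(C \mid Z) + (\text{negligible terms from } \delta)$.

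Combining, $\log|\cK|(1-\epsilon) - h(\epsilon) \le \bH(K\mid Z) \le \bH(C\mid Z) + \eta(\delta) \le (1-\rho)\,n\log|\Sigma| + \eta(\delta)$, where $\eta(\delta)\to 0$ appropriately. Dividing by $n\log|\Sigma|$ and taking $n\to\infty$ (so $\epsilon,\delta,\xi\to 0$ along the achievability definition) yields $R_{\sk} = \frac{\log|\cK|}{n\log|\Sigma|} \le 1-\rho$. The main obstacle is the middle step: rigorously ruling out that adaptive interaction plus local randomness lets the parties "hide" key bits in places Eve cannot see yet still reach agreement — the resolution is that strong reliability over the \emph{honest} (zero-noise) channel pins $K$ down as a function of the transcript $C$ modulo a $\delta$-probability event, and the untouched coordinates of $C$ number at most $(1-\rho)n$. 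A secondary, more routine, obstacle is handling the adaptivity of $S^r_i$ cleanly: since Eve fixes her sets deterministically as a function of her growing view and this does not disturb the honest execution, the conditional-entropy bound $\bH(C\mid Z)\le (1-\rho)n\log|\Sigma|$ still goes through round by round via the chain rule.
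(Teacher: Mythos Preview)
There is a genuine gap in your adversary choice. You let Eve pick $S^r_i, S^w_i$ with $|S^r_i\cup S^w_i|=\rho n_i$ and then ``simply read those positions.'' But Eve is only permitted to \emph{read} $|S^r_i|\le \rho_r n_i$ coordinates; the coordinates in $S^w_i\setminus S^r_i$ are not in her view. With zero noise your Eve is purely passive, her view $Z$ contains at most $\rho_r n$ transcript symbols, and your conditioning bound becomes $\sfH(C\mid Z)\le (1-\rho_r)\,n\log|\Sigma|$. The rest of your argument then yields only $R_{\sk}\le 1-\rho_r$, which is strictly weaker than $1-\rho$ whenever $S_w\not\subseteq S_r$. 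In other words, a passive attack cannot account for the write-only coordinates, and those are exactly the coordinates separating $1-\rho_r$ from $1-\rho$.

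The paper closes this gap by using an \emph{active} adversary who injects \emph{uniformly random} noise on $S_w$ (while reading $S_r$). This does two things. First, it effectively erases the information carried on the write-only coordinates: the values Bob receives on $S_w\setminus S_r$ become independent of what Alice sent there, which is what makes the mutual-information term in the paper's reliability lemma vanish. Second, strong reliability must still guarantee $\bPr(K_A=K_B)\ge 1-\delta$ \emph{under this active attack}, so via Fano the key is (up to $\delta$) determined by the parties' views restricted to the untouched coordinates $[n]\setminus(S_r\cup S_w)$, of which there are only $(1-\rho)n$. Combining this with the secrecy bound on the read set gives $\sfH(K)\le (1-\rho)\,n\log|\Sigma|$ plus vanishing terms, hence $R_{\sk}\le 1-\rho$. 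Note also that your clean ``$K$ is a function of the transcript'' step (via the rectangle property) relies on both parties seeing the \emph{same} transcript; once random noise is injected the two views differ on $S_w$, and one needs the more careful two-sided Fano/entropy manipulation the paper carries out in its reliability lemma rather than a single correctness appeal.
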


\begin{proof}
We assume the %total round of commutation of key agreement 
protocol has $\ell$ rounds.  
The length of the protocol message in the $i^{th}$ invocation of the AWTP channel  is $n_i$, and  $[n]=\bigcup_{i=1}^{\ell} [n_i]$. 
 For the $i^{th}$  communication round, let $c_i$ and $d_i$ be the codewords sent by Alice and Bob, respectively; $c_{i,j}$ and $d_{i,j}$ denote the $j^{th}$ components of codeword $c_i$ and $d_i$, respectively; $c^i$ and $d^i$ denote concatenations of all codewords sent in all invocations up to, and including, the $i^{th}$ round transmission. We use capital letters to refer to the random variables associated with,
$c_i$,  $d_i$, $c_{i,j}$,  $d_{i,j}, c^i$ and $d^i$, as $C_i, D_i, C_{i, j}, D_{i, j}, C^{i}$ and $D^{i}$, respectively.  
Let $C^{\ell, r}$, $C^{\ell, w}$, $D^{\ell, r}$, $D^{\ell, w}$ be the random variables of the  protocol messages on the adversarial reading sets and writing sets. 
Let 
$C^{\ell, a}$ and $D^{\ell, a}$ be the random variables on adversarial read only sets, $C^{\ell, b}$ and $D^{\ell, b}$ be the random variables on the adversarial read and write sets, $C^{\ell, c}$ and $D^{\ell, c}$ be the random variables on adversarial write only sets, $C^{{\ell}, d}$ and $D^{\ell, d}$ be the random variables corresponding to the sets without adversarial corruptions, respectively. We use  $X_i$ and $Y_i$ to  denote the corrupted word received by Bob and Alice when $C_{i}$ and  $D_{i}$ are sent
(by Bob and Alice, respectively), and define similarly $X_{i,j}, X^i$  %$X_{i}, X_{i,j}, X^i$
 and  $Y_{i,j}, Y^i$ %$Y_{i}, Y_{i,j}, Y^i$ 
 corresponding to $C_{i,j}, C^i$ and $D_{i,j}, D^i$, respectively.  

{\bf Step 1:} 
We first define an adversary $Adv_1$ that works as follow:
\begin{enumerate}
\item Selects the reading sets and writing sets in all $\ell$ rounds before the start of the protocol.
\item During the protocol, in  round $i$, %the adversary
 chooses a  random error vector $e$, and adds it    to the set $S^w_i$ of that round. % adversarial writing sets.
\item During  round $i$, the adversary reads the   components of  $S^r_i$.   %adversarial reading sets.
\end{enumerate}
Note that this adversary does not  use the information seen during the protocol to improve their chance of 
making the protocol to fail.
We give two lemmas that follow from secrecy and reliability. 

\begin{lemma}\label{le_sec}
For an $(\epsilon, \delta)$-key agreement protocol, the following holds for adversary $Adv_1$:
\begin{equation} \label{pf_bound1}
\sfI(K;C^{\ell,r}D^{\ell,r})\leq 4\epsilon  n\log(\frac{|\Sigma|}{\epsilon})
\end{equation}
and,
\bequation \label{pf_bound2}
\log |\cK| - \sfH(K) \leq 4\epsilon n \log(\frac{|\Sigma|}{\epsilon})
\eequation
\end{lemma}

Proof is in Appendix \ref{ap_le_sec}.

\begin{lemma}\label{le_rel}
For  any $(\epsilon,\delta)$-key agreement protocol,  the following holds:
\begin{equation}
\sfH(K|C^{\ell,a}C^{\ell,d}D^{\ell,a}D^{\ell,d})\leq 2\sfH(\delta)+2\delta\log |\cK|
\end{equation}
\end{lemma}

Proof is in Appendix \ref{ap_le_rel}.

{\bf Step 2:} We prove the upper bound: 
\[
\frac{\log |\cK|}{2n\log |\Sigma|}\leq 1-\rho+2\epsilon (1+\log_{|\Sigma}\frac{1}{\epsilon})
\]
We have,
\begin{equation}
\begin{split}
\sfH(K)&=\sfI(K; C^{\ell,a}C^{\ell,b}D^{\ell,a}D^{\ell,b}) + \sfH(K | C^{\ell,a}C^{\ell,b}D^{\ell,a}D^{\ell,b}).
\end{split}
\end{equation} 

The  first item $\sfI(K; C^{\ell,a}C^{\ell,b}D^{\ell,a}D^{\ell,b})$ is upper bounded using  Lemma \ref{le_sec} (Eq. (\ref{pf_bound1})). For second item,  %\\
  $\sfH(K | C^{\ell,a}C^{\ell,b}D^{\ell,a}D^{\ell,b})$, we have,

\begin{equation}
\begin{split}
&\sfH(K | C^{\ell,a}C^{\ell,b}D^{\ell,a}D^{\ell,b})\\
&=\sfH(K C^{\ell,b}D^{\ell,b} | C^{\ell,a}D^{\ell,a})- \sfH(C^{\ell,b}D^{\ell,b} | C^{\ell,a}D^{\ell,a})\\
&=\sfH(K | C^{\ell,a}D^{\ell,a}) + \sfH( C^{\ell,b}D^{\ell,b} |  K C^{\ell,a}D^{\ell,a}) - \sfH(C^{\ell,b}D^{\ell,b} | C^{\ell,a}D^{\ell,a})\\
&=\sfH(K | C^{\ell,a}D^{\ell,a}C^{\ell,d}D^{\ell,d}) + \sfH(C^{\ell,d}D^{\ell,d} | C^{\ell,a}D^{\ell,a}) - \sfH(C^{\ell,d}D^{\ell,d} | KC^{\ell,a}D^{\ell,a}) \\
&\;\;\;\;+ \sfH( C^{\ell,b}D^{\ell,b} |  K C^{\ell,a}D^{\ell,a}) - \sfH(C^{\ell,b}D^{\ell,b} | C^{\ell,a}D^{\ell,a})\\
&\leq 2\sfH(\delta)+2\delta\log |\cK| +\sfH(C^{\ell,d}D^{\ell,d})\\
&\leq 2\sfH(\delta)+2\delta\log |\cK| + (1-\rho) 2n \log |\Sigma|.
\end{split}
\end{equation} 

So the bound on $\sfH(K)$ is,
\bequation \label{pf_bound3}
\sfH(K)\leq  4\epsilon  n\log(\frac{|\Sigma|}{\epsilon}) + 2\sfH(\delta)+2\delta\log |\cK| + (1-\rho) 2n \log |\Sigma|.
\eequation
From (Eq. (\ref{pf_bound2}) and (\ref{pf_bound3})), and letting $\delta \rightarrow 0$ as $n\rightarrow \infty$, we have,
\bequation
R_\sk=\frac{\log |\cK|}{2n \log |\Sigma|}\leq 1-\rho + 4\epsilon (1+\log_{|\Sigma|}\frac{1}{\epsilon}).
\eequation
\qed
\end{proof}

 \vspace{2mm}
 \noindent
 {\bf Weak reliability.}
%Next, we show the upper bound on rate of key agreement protocol with weak reliability over AWTP channel.
A natural question is if the upper bound will be  affected if strong reliability is replaced by weak reliability, where 
the protocol success also includes cases that Alice and/or Bob output $\perp$.
We prove the following theorem using an approach similar to above. The proof is in full version of paper. 
\begin{theorem}\label{TheBoundKeyPD}
The upper bound on rate of key agreement protocol with weak reliability over AWTP channel is bounded by $R_{\sk} \leq 1 - \rho_r$.
\end{theorem}

The bound suggests that the corrupted components of   protocol messages (corresponding to $S_i^w$)  can be detected and 
so the secret key rate is  limited by the leakage of the components in $S_i^r$.

\subsection{Rate of Key Agreement Protocol over AWTP-PD channel}
In this setting the \rw-AWTP channel is used to establish the initial dependent variables, and the remaining construction is
over a two-way PD.
The following theorem gives the %First we give the 
upper bound on the secret key rate of key agreement protocol in this setting.
% with strong reliability over AWTP-PD channel. 
Proof strategy is similar to above and is given in the full version of the paper.
\begin{theorem}\label{TheBoundKeyPD}
The upper bound on the secret key rate of key agreement protocol over \rw-AWTP-PD channel, with strong reliability is,
% is bounded by
 $R_{\sk} \leq 1 - \rho$.\\
 For weak reliability the upper bound is $R_{\sk} \leq 1 - \rho_r$.
\end{theorem}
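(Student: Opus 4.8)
The plan is to reduce this to the single-round analysis already developed for Theorem~\ref{the_boundkey}, exploiting that in the AWTP-PD model only the first round uses the partially corrupted channel while every subsequent round is over a PD channel that Eve reads in full. Concretely, I would run the same adversary $Adv_1$ as in the proof of Theorem~\ref{the_boundkey}: it fixes the first-round reading set $S_r$ and writing set $S_w$ in advance, reads the first-round codeword on $S_r$, and adds a uniformly random error vector supported on $S_w$; in every later round it merely reads the (public) transcript. Lemmas~\ref{le_sec} and~\ref{le_rel} then apply with the only change that the ``reading set'' variables $C^{\ell,r}D^{\ell,r}$ now comprise the first-round components in $S_r$ \emph{together with all} components transmitted in rounds $2,\dots,\ell$, since the PD channel has reading capability $[n_i]$ and writing capability $\emptyset$.

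For strong reliability I would bound $\sfH(K)$ exactly as in Step~2 of the proof of Theorem~\ref{the_boundkey}, writing $\sfH(K)=\sfI(K;C^{\ell,a}C^{\ell,b}D^{\ell,a}D^{\ell,b})+\sfH(K\mid C^{\ell,a}C^{\ell,b}D^{\ell,a}D^{\ell,b})$, applying Lemma~\ref{le_sec} to the first term, and applying Lemma~\ref{le_rel} to the residual entropy after passing from the read-and-write part $C^{\ell,b}D^{\ell,b}$ to the uncorrupted part $C^{\ell,d}D^{\ell,d}$. The only new point is the accounting of $\sfH(C^{\ell,d}D^{\ell,d})$: the coordinates that are neither read nor written all lie inside the first round (the PD rounds are entirely inside the reading set), so $\sfH(C^{\ell,d}D^{\ell,d})\le (1-\rho)\,n_1\log|\Sigma|\le (1-\rho)\,n\log|\Sigma|$. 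Combining this with $\log|\cK|-\sfH(K)\le 4\epsilon n\log(|\Sigma|/\epsilon)$ from Lemma~\ref{le_sec}, and letting $\delta\to 0$ and $\epsilon\to 0$ as $n\to\infty$, gives $R_{\sk}=\frac{\log|\cK|}{n\log|\Sigma|}\le 1-\rho$.

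For the weak-reliability claim I would replace the strong-reliability step by the argument behind the interactive weak bound $R_{\sk}\le 1-\rho_r$: once Alice and Bob may abort with $\perp$, the first-round writing set $S_w$ becomes detectable, so in the residual-entropy step one only needs to discard the first-round reading set rather than the whole set $S_r\cup S_w$. This replaces the factor $(1-\rho)$ by $(1-\rho_r)$ in the bound on the relevant transcript entropy, and the same limiting argument yields $R_{\sk}\le 1-\rho_r$.

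I expect the main obstacle to be the bookkeeping in the residual-entropy step: one must show that conditioning on the first-round read set, on every (public) PD transcript, and on $Adv_1$'s own error vector leaves $K$ with entropy no larger than that of the first-round uncorrupted coordinates, up to the $\sfH(\delta)+\delta\log|\cK|$ slack coming from reliability. This is precisely where the PD rounds must be argued to be ``free'' for the adversary---fully absorbed into the reading set---and where, under weak reliability, the $\perp$-aborting convention is used to neutralize the contribution of $S_w$. Once this is set up, the chain of (in)equalities has the same form as in the proof of Theorem~\ref{the_boundkey}, and normalizing by the total length $n$ (which only adds further fully public symbols) can only make the bound easier to meet in the limit.
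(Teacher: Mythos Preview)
Your proposal is correct and matches the paper's stated approach: the paper explicitly says the proof strategy is ``similar to above'' (i.e., to Theorem~\ref{the_boundkey}, with details deferred to the full version), and your reduction---absorbing every PD-round symbol into Eve's reading set so that the uncorrupted coordinates $C^{\ell,d}D^{\ell,d}$ lie entirely in round~1 and hence contribute at most $(1-\rho)n_1\log|\Sigma|\le(1-\rho)n\log|\Sigma|$---is precisely that adaptation. Your weak-reliability sketch is vaguer than the strong case, but the paper is equally brief there, and your intuition (that allowing $\perp$ neutralizes the contribution of $S_w$ so only the reading set constrains the rate) is the right one.
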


\remove{
The proof is in full version of paper. 

\vspace{3mm}

Next, we show the upper bound on rate of key agreement protocol with weak reliability over AWTP-PD channel.

\begin{theorem}\label{TheBoundKeyPD}
The upper bound on rate of key agreement protocol with weak reliability over AWTP channel is bounded by $R_{\sk} \leq 1 - \rho_r$.
\end{theorem}

The proof is in full version of paper. 

}

\begin{remark}
The rate in \rw-AWTP-PD does take into account the communication over the PD. This is different from Maurer's definition \cite{Mau93?} where PD is free. The bound however is the same as the interactive case  where all communications is over \rw-AWTP. 
This is  surprising and shows that  the secret key rate could could stay the same if the channel is partially corrupted.
\end{remark}

%%%%%%%%%key agreement protocol%%%%%%%%%%%

\section{Constructions} %and Secure Message Transmission Protocol}

We first introduce the building blocks that are use in our construction, %key agreement protocol, 
and then describe the constructions  of protocols that achieve the upper bounds   in Section 3.1  and 3.2. 
%and the construction of secure message transmission protocol. 

\subsection{Cryptographic primitives}

\subsubsection{Universal Hash Family}\label{sec_mac}

%Universal hashing family refers to select a hash function randomly from a family of hash functions with a certain mathematical property. This guarantees a low number of collisions in expectation, even if the message is chosen by adversary.

%\begin{definition}\cite{S02}
An $(N,n,m)$-hash family is a set  $\cal F$ of $N$ functions, $f: {\cal X}\rightarrow {\cal T}$,  $f\in \cal F$, where $|{\cal X}|=n$ and $|{\cal T}|=m$. 
%\end{definition}
%There will be no 
Without loss of generality, we assume $n\geq m$.
\begin{definition}\cite{S02} 
Suppose % that the $(N, n, m)$-hash family $\cal F$ has 
the range $\cal T$  of an $(N, n, m)$-hash family $\cal F$ %which 
is an additive Abelian group. 
$\cal F$ is called $\epsilon$-$\Delta$ universal, if  for any two elements $x_1, x_2\in {\cal X}, x_1\neq x_2,$, and for any element $t\in {\cal T}$, there are  at most $\epsilon N$ functions $f\in \cal F$ such that $f(x_1)-f(x_2)=t$, were the operation is from the group.
\end{definition}

Let $q$ be a prime and $u\leq q-1$. Let the message be ${\bf x}=\{x_1, \cdots, x_{u}\}$. For $\alpha \in \bF_q$, define the universal hash function $\mathsf{hash}_\alpha$ by the rule,
\begin{equation}\label{eq_mac1}
t=\mathsf{hash}_\alpha({\bf x})=x_1\alpha+x_2\alpha^2+\cdots+x_{u}\alpha^{u} \mod q 
\end{equation}
Then $\{\mathsf{hash}_\alpha(\cdot): \alpha\in \bF_q\}$ is a $\frac{u}{q}$-$\Delta$ universal $(q, q^u, q)$-hash family. This %We will use a classic 
is a known construction of $\frac{u}{q}$-universal hash family \cite{S02}.

\subsubsection{Message Authentication Code}%\label{sec_mac}

A message authentication code (MAC)\cite{S84}  is a  cryptographic primitive that allows a sender who shares a secret key with 
the receiver to construct authenticated messages  to be sent %send an information block 
 over a channel that is tampered by an adversary, and 
the receiver 
to  be able to verify the integrity of the received message.  

\begin{definition}
A message authentication code  consists of
two algorithms $(\mathsf{MAC},\mathsf{Ver})$ that are used for authetication and verification, respectively. 
For a message $m$  an {\em  authentication tag,}  or simply a {\em tag}, is  computed,
\[
t = \mathsf{MAC}(m, r),
\]
and a  tagged message $(m,t)$ is constructed. % to be sent over the channel.  
The verifier accepts a tagged  pair $(m,t)$ if $\mathsf{Ver}((m, t), r)) = 1$.
Security of  a one-time MAC is  defined as,
$$\Pr [(m', t'), \mathsf{Ver}((m',t'), r) =1|(m, t),  t= \mathsf{MAC}(m, r) ]\leq \delta$$
\end{definition}

We use a MAC construction that uses polynomials over $\bF_{q}$. Let  $m$ be a vector of length $\ell$, and $r = (\alpha, \beta)$, $t$ is over $\bF_{q}$. Define the MAC generation function $\MAC : \bF_q^{\ell} \times \bF_q^2 \rightarrow \bF_q$, 
where  $t = \MAC(m, (\alpha, \beta)) $ as,
\[
t=\MAC(m, (\alpha, \beta))= \sum_{i=0}^{\ell-1} x_i \alpha^{i} + \beta \mod q.
\]

\begin{lemma}\label{le_mac}
For the MAC construction above, the success probability of the adversary in forging a tagged message  $(m', t')$ that pass MAC verification is no more than $\frac{\ell}{q}$.
\end{lemma}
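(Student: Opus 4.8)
The plan is to reduce the security of this specific MAC to the $\frac{\ell}{q}$-$\Delta$-universality of the polynomial hash family already analyzed in Section~\ref{sec_mac}, with the additive $\beta$ term serving as a one-time pad on the tag. First I would fix the setup: an adversary sees a single valid pair $(m,t)$ with $t = \sum_{i=0}^{\ell-1} m_i \alpha^i + \beta \bmod q$ for a uniformly random key $(\alpha,\beta) \in \bF_q^2$, and must produce $(m',t')$ with $m' \neq m$ (the relevant forgery case) satisfying $t' = \sum_{i=0}^{\ell-1} m'_i \alpha^i + \beta \bmod q$. Subtracting the two verification equations eliminates $\beta$ entirely, so a successful forgery requires
\[
t' - t = \sum_{i=0}^{\ell-1} (m'_i - m_i)\, \alpha^i \pmod q .
\]

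Next I would argue that, conditioned on the observed $(m,t)$, the key component $\alpha$ is still uniformly distributed over $\bF_q$. This is the crucial point: for every value of $\alpha$ there is exactly one $\beta$ consistent with the observed pair, namely $\beta = t - \sum m_i \alpha^i$, so the posterior distribution on $\alpha$ equals its prior (uniform). Hence the adversary's choice of $(m',t')$ is independent of $\alpha$, and the forgery succeeds only if the adversary's chosen nonzero polynomial $\Delta(x) = \sum_{i=0}^{\ell-1}(m'_i - m_i)x^i$ evaluates to the adversary's chosen constant $t'-t$ at the random point $\alpha$. Since $m' \neq m$, the polynomial $\Delta(x)$ is nonzero of degree at most $\ell-1$, so $\Delta(x) - (t'-t)$ is a nonzero polynomial of degree at most $\ell-1$ and has at most $\ell-1$ roots in $\bF_q$; therefore the probability over $\alpha$ is at most $\frac{\ell-1}{q} \le \frac{\ell}{q}$. (Equivalently, one can just invoke the $\frac{\ell}{q}$-$\Delta$-universal property of $\{\mathsf{hash}_\alpha\}$ from Section~\ref{sec_mac} applied to $x_1 = m$, $x_2 = m'$ and target $t' - t$, after noting $\beta$ cancels.)

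I would then handle the degenerate remaining case $m' = m$, $t' \neq t$: here the two verification equations force $t' - t \equiv 0$, a contradiction, so this forgery never succeeds and contributes $0$ to the bound. Combining, the total forging probability is at most $\frac{\ell}{q}$, which is the claimed bound. The main obstacle — really the only subtle point — is justifying that $\alpha$ remains uniform given the adversary's view; everything else is the standard root-counting argument for polynomial evaluation. I would make sure to state explicitly that the argument uses the one-time nature of the MAC (a second legitimate tag under the same key would leak $\alpha$), matching the one-time security definition given above.
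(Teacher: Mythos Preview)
Your approach—cancel $\beta$ by subtracting the two verification equations, observe that $\alpha$ remains uniform conditioned on the single observed pair, then bound the number of roots of the resulting difference polynomial—is exactly the standard argument the paper is alluding to when it says the proof ``is a direct extension of the proof in \cite{DGMP92}''; the paper supplies no further details, so there is nothing more to compare against.

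There is, however, a real gap in one step, and it traces back to the paper's own indexing. You assert that since $m'\neq m$ the polynomial $\Delta(x)=\sum_{i=0}^{\ell-1}(m'_i-m_i)x^i$ is nonzero, ``so $\Delta(x)-(t'-t)$ is a nonzero polynomial of degree at most $\ell-1$.'' That inference is false in general: if $m$ and $m'$ differ only in coordinate $0$, then $\Delta(x)$ is the nonzero \emph{constant} $m'_0-m_0$, and the adversary may choose $t'=t+(m'_0-m_0)$, making $\Delta(x)-(t'-t)\equiv 0$. With the MAC exactly as written in the paper (the sum starting at $\alpha^{0}$) this is a probability-$1$ forgery, so the stated bound $\ell/q$ cannot hold as written. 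The fix is to align the MAC with the hash family of Section~\ref{sec_mac} (and with \cite{DGMP92}), i.e.\ take $\MAC(m,(\alpha,\beta))=\sum_{i=1}^{\ell} m_i\alpha^{i}+\beta$; then $\Delta(x)$ has zero constant term but some nonzero coefficient in degree $\geq 1$, so $\Delta(x)-(t'-t)$ is genuinely a nonzero polynomial of degree at most $\ell$, and your root-counting step delivers the claimed $\ell/q$. You should flag this indexing issue explicitly and run the argument under the corrected definition; note also that your alternative ``just invoke the $\frac{\ell}{q}$-$\Delta$-universality of $\{\mathsf{hash}_\alpha\}$'' only applies once the MAC's polynomial actually matches that hash family.
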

The proof is a direct extension of the proof in \cite{DGMP92}.

\subsubsection{Algebraic Manipulation Detection Code}

Algebraic manipulation detection code (AMD code) \cite{CDFP08} can be used to encode a source into a value  stored on $\Sigma(G)$ so that any tampering by an adversary will be detected, except with a small error probability $\delta$.

\begin{definition}[AMD Code \cite{CDFP08} ]\label{def_amd} 
Let   $\cal G$ be an additive group. An $({\cal X}, {\cal G}, \delta)$-Algebraic Manipulation Detection code ($({\cal X}, {\cal G}, \delta)$-AMD code) consists of two algorithms $(\mathsf{AMDenc}$ and $\mathsf{AMDdec})$ that are used for encoding and decoding, respectively. % of messages from $\cal X$.
Encoding is a probabilistic mapping $\mathsf{AMDenc}: {\cal X} \rightarrow {\cal G}$ that   maps an element of $\cal X$ to an element of the group $\cal G$. 
Decoding is a deterministic mapping $\mathsf{AMDdec}: \cal G \rightarrow {\cal X} \cup \{\perp\}$ and for any $x \in {\cal X}$ satisfies $\mathsf{AMDdec}(\mathsf{AMDenc}(x)) = x$.  The security of AMD codes requires,
\begin{eqnarray} \label{amd}
\bPr[\mathsf{AMDdec}(\mathsf{AMDenc}(x) + \Delta) \not\in \{x,\perp \}] \leq \delta, \,\, 
\end{eqnarray}
for all $x\in {\cal X}, \Delta \in {\cal G}.$
\end{definition}

An  AMD code is {\em systematic} 
if  the encoding has the form $\mathsf{AMDenc}: {\cal X}\rightarrow {\cal X} \times {\cal G}_1 \times {\cal G}_2$, mapping $x$ to $ (x, r, t=f (x, r))$
for some function $f$, where $r \stackrel{\$}\leftarrow {\cal G}_1$. % here $r\in {\cal G}_1$ is  the  randomness used in the encoding and $t\in {\cal G}_2$ is called  the {\em tag}. % of AMD code. 
The decoding function outputs $\mathsf{AMDdec}(x, r, t) = x$ if and only if $t=f (x, r)$, and $\perp$ otherwise.

We use a systematic AMD code, given in \cite{CDFP08}, over an extension field.  Let $d$ be an integer such that $d + 2$ is not divisible by $q$. Define the encoding $\mathsf{AMDenc}:\bF_{q}^d \rightarrow \bF_{q}^d \times \bF_{q} \times \bF_{q}$ as $\mathsf{AMDenc}(x) = (x,r,f(x,r))$, where:
\begin{equation}\label{EqAMDFunction}
f(x, r)= \left( r^{d+2}+\sum_{i=1}^{d} x_i r^i\right)\mod q.
\end{equation}

\begin{lemma}\label{le_amd}
For the AMD code above, the success chance of an adversary in tampering with a stored codeword
%that  has no access to the  codeword
 $(x,r,t)$ and  constructing a new codeword  $(x',r',t')= (x'=x+\Delta x, r'=r+\Delta r, t' =t+\Delta t)$,
that  satisfies $t' = f(x', r')$, is no more than $\frac{d+1}{q}$. 
\end{lemma}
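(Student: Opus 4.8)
The plan is to reduce the security statement to a polynomial root-counting argument, in the same spirit as the original AMD analysis in \cite{CDFP08}, but adapted to the concrete extension-field instantiation given by Eq.~(\ref{EqAMDFunction}). Fix an arbitrary source element $x\in\bF_q^d$ and an arbitrary nonzero tampering vector $(\Delta x,\Delta r,\Delta t)\in\bF_q^d\times\bF_q\times\bF_q$ chosen by the adversary; the randomness is only $r\stackrel{\$}\leftarrow\bF_q$, uniform and independent of the adversary's choice. The adversary succeeds precisely when
\[
f(x+\Delta x,\ r+\Delta r)=f(x,r)+\Delta t,
\]
so I would first write out both sides using the definition $f(x,r)=r^{d+2}+\sum_{i=1}^d x_i r^i \bmod q$ and bring everything to one side, obtaining an equation of the form $g(r)=0 \bmod q$, where $g$ is a polynomial in the single variable $r$ with coefficients depending on $x,\Delta x,\Delta r,\Delta t$.

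The key step is to argue that $g$ is a nonzero polynomial of degree at most $d+1$, regardless of the adversary's choice. I would split into two cases. First, if $\Delta r=0$: then the two leading terms $r^{d+2}$ cancel, and the equation becomes $\sum_{i=1}^d (\Delta x_i)\, r^i = \Delta t \bmod q$; since $(\Delta x,\Delta t)\ne 0$ (as $\Delta r=0$ and the tampering is nonzero), this is a nonzero polynomial of degree at most $d\le d+1$. Second, if $\Delta r\ne 0$: here I expand $(r+\Delta r)^{d+2}$ by the binomial theorem; the $r^{d+2}$ term cancels against $r^{d+2}$ on the right, and the coefficient of $r^{d+1}$ on the left-hand side after moving things over is $(d+2)\Delta r$ plus a contribution from $\sum_i x_i (r+\Delta r)^i$ which has degree only $d$ — so the $r^{d+1}$ coefficient is exactly $(d+2)\Delta r$. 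This is nonzero precisely because $d+2$ is not divisible by $q$ (the hypothesis of the lemma) and $\Delta r\ne 0$, so $g$ has degree exactly $d+1$ in this case. In both cases $g$ is a nonzero polynomial over the field $\bF_q$ of degree at most $d+1$, hence has at most $d+1$ roots, and since $r$ is uniform over $\bF_q$ the success probability is at most $\frac{d+1}{q}$.

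The main obstacle — though a mild one — is the bookkeeping in the $\Delta r\ne 0$ case: one must be careful that the degree-$(d+1)$ coefficient really is isolated and equals $(d+2)\Delta r$, i.e.\ that no lower-degree part of the expansion or the $\sum x_i r^i$ term sneaks into that coefficient, and that the cancellation of the top term $r^{d+2}$ is exact. This is exactly where the seemingly technical hypothesis ``$d+2$ not divisible by $q$'' is used, and it is the only place it is used; flagging that is the crux of the write-up. Everything else is the standard Schwartz–Zippel-style conclusion that a degree-$k$ nonzero univariate polynomial over $\bF_q$ vanishes at a uniformly random point with probability at most $k/q$, together with the observation that the decoding function $\mathsf{AMDdec}$ rejects exactly when the tag check fails, so "not in $\{x,\perp\}$" is equivalent to "$(x',r',t')$ passes the check with $x'\ne x$" and the latter forces the displayed equation. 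I would close by noting this matches the $\delta=\frac{d+1}{q}$ bound claimed in Definition~\ref{def_amd} for this instantiation.
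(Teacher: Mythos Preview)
The paper does not actually supply a proof of Lemma~\ref{le_amd}; it states the lemma and relies on the original AMD-code paper \cite{CDFP08} for the argument. Your proposal is correct and is precisely the standard proof from \cite{CDFP08}: expand $f(x+\Delta x,r+\Delta r)-f(x,r)-\Delta t$ as a univariate polynomial $g(r)$, check in the two cases $\Delta r=0$ and $\Delta r\neq 0$ that $g$ is nonzero of degree at most $d+1$ (using the hypothesis $q\nmid d+2$ exactly where you indicate), and apply the root bound over $\bF_q$.

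One minor tightening: in the $\Delta r=0$ case you justify nonvanishing of $g$ by ``$(\Delta x,\Delta t)\neq 0$ since the tampering is nonzero.'' For the AMD security game (Definition~\ref{def_amd}) the relevant event is $x'\neq x$, i.e.\ $\Delta x\neq 0$, which already forces $g\not\equiv 0$ in that case; your closing remark about ``not in $\{x,\perp\}$'' being equivalent to ``passes the check with $x'\neq x$'' is the right way to frame it, and you could simply invoke $\Delta x\neq 0$ directly there rather than the weaker $(\Delta x,\Delta t)\neq 0$. This is cosmetic---your bound is valid either way.
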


\subsubsection{Randomness Extractor}\label{sec_extractor}

A randomness extractor is a function, which is applied to a weakly random entropy source (i.e.,  a non-uniform random variable), to
%a highly random output that appears independent from the source and
obtain a uniformly distributed source.

\begin{definition}\cite{DORS08}
A (seeded) $(n, m, r, \delta)$-strong extractor is a function 
$
\mathsf{Ext}: q^{n}\times q^{d}\rightarrow q^{m}
$
such that for any source $X$ with $\sfH_{\infty}(X)\geq r$, we have
\[
\SD((\mathsf{Ext}(X, \mathsf{Seed}), \mathsf{Seed}), (U, \mathsf{Seed}))\leq \delta 
\]
with the $ \mathsf{Seed}$  uniformly distributed over $\bF_q^d$.

A function $\ext: q^{n}\rightarrow q^{m}$ is a (seedless) $(n, m, r, \delta)$-extractor if for any source $X$ with $\sfH_{\infty}(X)\geq r$, the distribution $\ext(X)$ satisfies  $\SD(\ext(X), U)\leq \delta$.
\end{definition}

\remove{
The average min-entropy and average-case strong extractor measures the case that the adversary has side information $Y$ about source $X$. Let the average min-entropy be $\bH_{\infty}(X|Y)=-\log \mathsf{E}_{y\leftarrow Y}(\max_x\bPr(X=x|Y=y))$.
\begin{definition}\cite{DORS08}
Let the source $\bH_{\infty}(X|Y)\geq r$. A (seedless) average-case $(n, m, r, \delta)$-strong extractor is a function,
\[
{\mathsf{Ext}}: q^{n}\rightarrow q^{m}
\]
such that $\SD(({\mathsf{Ext}}(X), Y), (U, Y))\leq \delta$.
\end{definition}
}

%There is a construction of 
A seedless extractor  can be constructed from Reed-Solomon (RS) codes \cite{CDS12}.
%, which is easy to construct and extracts uniform randomness.
The construction works only for a  restricted class of  sources, known as {\em symbol-fixing sources}.

\begin{definition} 
An $(n, m)$ symbol-fixing source is % the distribution of
a tuple of independent random variables ${\bf X} = (X_1, \cdots, X_n)$, defined over a set $\Omega$, 
such that $m$ of the variables take values uniformly and independently from $\Omega$,
and the remaining variables have fixed values.

\end{definition}

We show a construction of a seedless $(n, m, m\log q, 0)$-extractor from RS-codes. Let $q\geq n+m$. Consider an $(n,m)$  symbol-fixing source
 ${\bf X}=(X_1, \cdots, X_n)\in \bF^n_q$ with $\sfH_\infty(X)\geq m\log q$. The extraction has two steps:
\begin{enumerate}
\item %The extractor first c
Construct a polynomial $f(x) \in \bF_q[X]$ of degree $\leq n -1$, such that $f(i) =x_i$ for $i=0, \cdots, n-1$.

\item %Then, the extractor e
Evaluate the polynomial at $i=\{n,\cdots, n+m-1\}$. That is,
\[
\mathsf{Ext}({\bf x})=(f(n),f(n+1),\cdots ,f(n+m-1))
\]
\end{enumerate}

\subsubsection{Limited-View Adversary Code}

Limited-view adversary codes provide reliable communication over an \rw-AWTP.
%%%
\remove{
A model of adversarial channel, called Limited-View Adversary Channel (LVAC), was introduced in \cite{SW13}. In this model the adversary is computationally unlimited but their access to the channel (codeword) is limited as follows: for a codeword of length $n$, the adversary can adaptively choose $\rho_rn$ components to "see" and $\rho_wn$ components to modify, and the modification is by "adding" to the codeword an error vector of weight at most $\rho_wn$.  Limited-view adversary code (LV adversary code) provides reliable communication  against an adversary who has partially view of communication channel, and use this view to corrupt the sent codeword 
}
\cite{SW13,WSN15}. 
%The adversary in limited-view adversary channel is defined  same as the one in AWTP channel, so the LV adversary code can also be used to achieve reliable communication over AWTP channel.
 
\begin{definition}\label{def_awtpcode}
An $(n, k, \delta)$-Limited-View Adversary Code $($or $(n, k, \delta)$-LV adversary code$)$ for a $(\rho_r, \rho_w)$-AWTP channel,
is a code of  dimension $k$ and length $n$.
Encoding and decoding algorithms are $(\LVACenc, \LVACdec)$.
% such that t
The probability  that the receiver output a  message $m'\neq m$,   %with probability 
is bounded by $\delta$. That is for any $m\in\cM$, and adversary's observation $z$ we have,
\[
\bPr (\LVACdec(\LVACenc(m)+\Adv(z))\neq m)\leq \delta.
\] 
\end{definition}
 
%We will use the 
LV adversary codes provide % transmit messages reliably
reliable communication  over AWTP channels.  Previous constructions
%The previous construction of LV adversary code
  achieves capacity $1 - \rho_w$ \cite{WSN15,SW131}, but with  the condition that  $\rho_r + \rho_w <1$.  
In Appendix \ref{ap_lvcode} we give  a simple construction of LV adversary code with low rate of communication, but get rid of the restriction $\rho_r + \rho_w <1$ (Appendix \ref{ap_lvcode}).  Security of this construction is given by the following theorem.

\begin{theorem}\label{lvac}
The LV adversary code has rate $R_\lv = 1 + \rho - \rho_r - 2\rho_w $ over $(\rho_r, \rho_w, \rho)$-AWTP channel.
The probability of error for a length $n$ code is % except with error probability
 $\delta \leq \frac{un}{q}$.   The computation is polynomial in code length $n$.
\end{theorem}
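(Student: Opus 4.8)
The plan is to prove the three claims — rate, reliability, and computational efficiency — by analyzing the construction given in Appendix~\ref{ap_lvcode}. First I would recall its structure: the message $m\in\bF_q^{k}$ is expanded, using Alice's local randomness, into a word carrying the message symbols, a random authentication key, and tag symbols (a $\frac{u}{q}$-$\Delta$ universal hash of $m$ as in~(\ref{eq_mac1}), equivalently an AMD/MAC tag in the sense of Lemmas~\ref{le_amd} and~\ref{le_mac}); this word is then dispersed over the $n$ channel positions by a Reed--Solomon--type evaluation code, so that the symbols lying on any $\rho_r n$ positions leave part of the key and the redundancy unknown to Eve. For the rate I would count the effective payload. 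The $(1-\rho)n$ positions outside $S_r\cup S_w$ arrive intact; among the $\rho_w n$ written positions, the $(\rho_r+\rho_w-\rho)n$ positions in $S_r\cap S_w$ can be set by Eve to \emph{any} value (she reads the old symbol and adds the exact offset), while the remaining written positions carry a blindly chosen additive error. Accounting for the symbols Eve can arbitrarily control together with the authentication overhead needed to neutralise them gives $k=\big((1-\rho_w)-(\rho_r+\rho_w-\rho)\big)n$, i.e. $R_\lv=1+\rho-\rho_r-2\rho_w$; note this counting never uses $\rho_r+\rho_w<1$, which is exactly the restriction the new construction removes.

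For reliability I would fix $m$ and an arbitrary (adaptive) adversary that, from its view $z$ of $S_r$, injects errors on $S_w$, and show that $\LVACdec(\LVACenc(m)+\Adv(z))\neq m$ forces one of two bad events: Eve correctly guessed a part of the authentication key that is never transmitted on $S_r$, or she produced an error pattern that makes the recomputed tag consistent with some $m'\neq m$ under that (partially hidden) key. Since at most $\rho_r n$ key/tag symbols are visible and all computations are over $\bF_q$, each such event amounts to a collision of the $\frac{u}{q}$-$\Delta$ universal hash (respectively a successful algebraic manipulation or MAC forgery), and therefore has probability at most $\frac{u}{q}$ per injected change. Taking a union bound over the at most $n$ positions at which Eve can inject a change gives $\delta\leq\frac{un}{q}$, which is negligible once $q$ grows with $n$.

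Computational efficiency is the routine part: encoding is polynomial interpolation followed by evaluation at $n$ points for the Reed--Solomon layer, plus evaluating the degree-$O(u)$ tag polynomials of~(\ref{eq_mac1})/(\ref{EqAMDFunction}); decoding is bounded-distance (or list) decoding of the Reed--Solomon code followed by recomputing and checking the tags — all polynomial in $n$ and $\log q$. I expect the main obstacle to be the reliability step: one must handle adaptive strategies in which the $\rho_r n$ symbols Eve reads coincide with (a large part of) the $\rho_w n$ symbols she overwrites, so that she can try to drive the decoder toward a structurally valid but wrong codeword, and the argument must guarantee that enough of the authentication key stays outside Eve's view — for every admissible overlap of $S_r$ and $S_w$ — for the per-injection bound $\frac{u}{q}$, and hence the union bound, to go through.
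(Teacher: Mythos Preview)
Your proposal misreads the construction in Appendix~\ref{ap_lvcode}, and this propagates into both the rate and the reliability arguments. The code is built in the \emph{opposite} order from what you describe: the sender first Reed--Solomon encodes the message $m$ into $(c_1,\dots,c_n)$ with each $c_i\in\bF_q^{u-2}$, and \emph{then} applies an independent AMD encoding to each component, sending $(c_i,r_i,t_i)\in\bF_q^u$. There is no global authentication key hidden by the RS layer; instead each position carries its own fresh AMD randomness $r_i$. Correspondingly, the decoder works in the reverse of your order: it first checks the AMD tag at each position, discards positions that fail, and only then runs RS decoding on what remains.

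Once the construction is stated this way, the reliability argument is quite different from yours. Positions in $S_w\setminus S_r$ (write-only) are hit by a blind additive offset; by Lemma~\ref{le_amd} each such position passes AMD verification with probability at most $\frac{u}{q}$, and a union bound over at most $n$ positions gives $\delta\le\frac{un}{q}$. With high probability, then, all write-only positions are flagged and become \emph{erasures}. Positions in $S_r\cap S_w$ (size $\rho_0 n$ with $\rho_0=\rho_r+\rho_w-\rho$) can be set by Eve to arbitrary valid AMD codewords and survive as \emph{errors}. The remaining $n'=n-(\rho_w-\rho_0)n$ positions are fed to the RS decoder, which corrects $\rho_0 n$ errors provided $\ell\le n'-2\rho_0 n$, i.e.
\[
R_\lv=\frac{\ell}{n}=1-(\rho_w-\rho_0)-2\rho_0=1-\rho_w-\rho_0=1+\rho-\rho_r-2\rho_w.
\]
So the $-2\rho_w$ term is RS error-correction redundancy for the read-and-write positions, not ``authentication overhead,'' and the proof never needs to argue that Eve fails to guess a partially hidden key. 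Your concern about adaptive overlap of $S_r$ and $S_w$ is handled structurally: overlap simply shifts positions from the erasure bucket to the error bucket, and the RS budget $\ell\le n'-2\rho_0 n$ already accounts for the worst case.
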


\subsection{Interactive Key Agreement Protocol with Strong Reliability}% over AWTP channel}
\label{sec_sk_constr}

We introduce a three round % key agreement 
interactive protocol with strong reliability,  over AWTP channel.
The idea behind  %of key agreement 
the protocol is as follows.
% that i
In the first  transmission round, Alice  sends a sequence of  randomly selected   components to Bob.
The adversary reads over a  set  $S^r_{1}$, and adds errors on  a $S^w_{1}$.
Bob receives a vector that is partially corrupted and partially leaked to the adversary.
In the second round, Bob generates a key for each component, and uses a MAC algorithm to construct a 
tag for each component, using its ow attached key.
\remove{
 Since the adversary can read a fraction of communication on  sets $S^r_{1}$, and add errors on $S^w_{1}$, the word that Bob received may leak information of randomness to adversary. The randomness also have adversarial errors. In order to 
generate common randomness between Alice and Bob, in second round, 
Bob generates the key and tag pairs to authenticate the randomness received in previous round. 
}
The keys and tag pairs are sent to Alice using an LV-code and so are received correctly by Alice who will
use   the received key and tag pair to check the correctness
of the  $i^{th}$ component.
%who checks  to check if the randomness components that Bob receives are same as the her. The key and tag pairs are encoded by LV adversary code, and so reliably transmitted to Alice. 
In the key derivation step, Alice and Bob use a randomness extractor to generate a shared key %the secure key 
from their shared   randomness which is partially leaked to the adversary.

The construction of secure key agreement protocol uses universal hash function, seedless random extractor, and LV adversary code. 
Let $n_1$, $n_2$, $n_3$ be the length of the protocol messages  sent by Alice and Bob in each round.
The   total
communication  length % of the of protocol 
is $n=n_1+n_2+n_3$. 
Let the protocol be over $\bF_q^u$.   
We use: 1). the $\frac{u}{q}$-$\Delta$ universal $(q, q^{u-1}, q)$-hash family; 2). the seedless $(un_1,\ell,\ell \log q,0)$-extractor; 3). the $(n, k,  \delta)$-LV adversary code with  $k = \frac{2n_1}{u}$,  $n = \frac{k}{R_\lv}$, $\delta \leq \frac{un_1}{q}$ over alphabet $\Sigma = \bF_q^{u}$.

\begin{theorem}
The key agreement protocol has rate $R_\sk=1-\rho-\xi$ when $\rho_r + 2\rho_w < 1 + \rho$, over  %transmission alphabet for the 
 AWTP channel. The alphabet  size is $|\Sigma| = \cO(q^{\frac{1}{\xi}})$. The key agreement protocol is perfectly secure and the decoding error is bounded by $\delta \leq \xi$. The number of round is three. The computation complexity is $\mathcal{O}((n\log q)^2)$. 
\end{theorem}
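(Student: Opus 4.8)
The plan is to analyse the three-round protocol sketched above. In round~1 Alice sends a uniformly random vector $x=(x_1,\dots,x_{n_1})$, each $x_i$ a fresh uniform block of $\bF_q^u$, over the \rw-AWTP channel; Eve reads the blocks in a set $S^r_1$, $|S^r_1|\le\rho_r n_1$, and adds an error supported on $S^w_1$, $|S^w_1|\le\rho_w n_1$, so Bob receives $y$ with $y_i=x_i$ for all $i\notin S^w_1$. In round~2 Bob picks independent keys $\alpha_i\in\bF_q$ uniformly, forms tags $t_i=\mathsf{hash}_{\alpha_i}(y_i)$ with the $\tfrac uq$-$\Delta$-universal family of Section~\ref{sec_mac}, and sends $\{(\alpha_i,t_i)\}_{i\in[n_1]}$ encoded with the $(n,k,\delta)$-LV adversary code of Theorem~\ref{lvac} (so $k=O(n_1/u)$ payload blocks), which Alice recovers intact except with probability at most $u n_2/q$. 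Alice sets $G=\{\,i:\mathsf{hash}_{\alpha_i}(x_i)=t_i\,\}$ and in round~3 sends $G$ (an $n_1$-bit string) to Bob, again via an LV adversary code. Finally, for every accepted block both parties reduce it modulo the single linear relation that the published pair $(\alpha_i,t_i)$ reveals about it --- a map each party can compute since it knows $\alpha_i$ --- obtaining $\widehat x'_i\in\bF_q^{u-1}$ for $i\in G$ and $\widehat x'_i=0$ for $i\notin G$, and output $K=\ext(\widehat x')$ for the seedless $(un_1,\ell,\ell\log q,0)$-extractor of Section~\ref{sec_extractor}.

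I would first dispatch correctness and reliability. If Eve is passive then $y=x$, every tag verifies, $G=[n_1]$, both LV-codes decode, and $K_A=\ext(\widehat x')=K_B$ since $\ext$ is deterministic. Against an active Eve, $K_A\ne K_B$ only if some overwritten block is accepted, i.e.\ $i\in S^w_1\cap G$, or one of the two LV-codes mis-decodes. For $i\in S^w_1$, acceptance means $\mathsf{hash}_{\alpha_i}(e_i)=0$ for the fixed nonzero error $e_i=y_i-x_i$ chosen in round~1, before the independent key $\alpha_i$; by $\tfrac uq$-$\Delta$-universality this has probability at most $u/q$, and a union bound over the $\le\rho_w n_1$ overwritten blocks together with the two LV-decoding events gives $\delta\le u(\rho_w n_1+n_2+n_3)/q\le\xi$ once $q$ is large enough relative to $n$.

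Next the rate. The transcript has $n=n_1+n_2+n_3$ blocks of $\Sigma=\bF_q^u$ and the key has $\ell$ symbols of $\bF_q$, so $R_\sk=\ell/(un)$. Round~2 carries $O(n_1/u)$ payload blocks at rate $R_\lv=1+\rho-\rho_r-2\rho_w$, which is a positive constant \emph{exactly} under the hypothesis $\rho_r+2\rho_w<1+\rho$; hence $n_2=O(n_1/u)$, and likewise $n_3=O(n_1/u)$ since $G$ needs only $n_1$ bits, so $n=n_1(1+O(1/u))$. In the security step below I show that, conditioned on any view of Eve, $\widehat x'$ presents at least $(1-\rho)(u-1)n_1$ uniform private symbols, so one may take $\ell=(1-\rho)(u-1)n_1$, giving $R_\sk=(1-\rho)(1-O(1/u))$. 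Choosing $u=\Theta(1/\xi)$ makes $R_\sk\ge1-\rho-\xi$ while fixing $|\Sigma|=q^u=\cO(q^{1/\xi})$; all computations are polynomial interpolation and evaluation over $\bF_q$ (the extractor and the LV-code) together with hashing, i.e.\ $\cO((n\log q)^2)$.

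The crux is perfect secrecy. Fix an arbitrary view $z$ of Eve and, pessimistically, grant her the \emph{entire} round-2 and round-3 payloads --- all keys $\alpha_i$, all tags $t_i$, and the set $G$ --- since the LV adversary code is built only for reliability and carries no secrecy guarantee. Consider the blocks Eve did not read, i.e.\ $i\notin S^r_1$; there are at least $(1-\rho_r)n_1\ge(1-\rho)n_1$ of them, $[n_1]\setminus S$ is among them, and every $i\in[n_1]\setminus S$ is accepted since $y_i=x_i$ forces $i\in G$. For such a block $x_i$ is a fresh uniform element of $\bF_q^u$ independent of everything Eve saw in round~1, and its only residual leakage is the value $\mathsf{hash}_{\alpha_i}(x_i)$, which Eve reconstructs from $(\alpha_i,t_i,e_i)$ and which is precisely the coordinate we quotient out; hence $\widehat x'_i$ is uniform over $\bF_q^{u-1}$ and independent of $z$. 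Collecting these blocks shows that $\widehat x'$ conditioned on $z$ is a symbol-fixing source with at least $(1-\rho)(u-1)n_1\ge\ell$ uniform private symbols --- exactly the input required by the seedless $(un_1,\ell,\ell\log q,0)$-extractor --- so, since that extractor is exact ($\delta=0$), $P_{K\mid Z=z}=U$ for every $z$, which yields both perfect secrecy and uniformity of $K$. The main obstacle I anticipate is this last reduction: one must argue carefully that after conditioning on a generous over-approximation of Eve's view the common string still \emph{literally} looks like a symbol-fixing source on enough private coordinates --- hence the explicit per-block quotient by the tag direction, which must be computed identically by both parties and which must not, through the adversarially influenced acceptance set $G$ (and a possibly mis-decoded key in round~2, which costs at most one further private symbol per block), create a correlation between the surviving blocks and $z$. \qed
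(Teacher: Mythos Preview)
Your three-round skeleton, the use of the $\tfrac{u}{q}$-$\Delta$-universal hash for per-block authentication, the LV adversary codes for rounds~2 and~3, the identification of the hypothesis $\rho_r+2\rho_w<1+\rho$ with positivity of $R_\lv$, and the final appeal to the seedless symbol-fixing extractor all match the paper's proof. The reliability bound and the rate calculation are carried out the same way.

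The one substantive difference is in how you keep the published tag from leaking the payload. The paper writes each round-1 block as $(r_i,\beta_i)\in\bF_q^{u-1}\times\bF_q$ and sets $t_i=\mathsf{hash}_{\alpha_i}(r'_i)+\beta'_i$; the extra coordinate $\beta_i$ is a one-time mask, so for any unread block the pair $(\alpha_i,t_i)$ is statistically independent of $r_i$, and Alice simply feeds $s_i=r_i$ (or $0$) into the extractor with no post-processing. You instead let the tag reveal one $\bF_q$-linear relation on the whole block $x_i$ and then have both parties ``quotient out'' that relation to obtain $\widehat{x}'_i\in\bF_q^{u-1}$. Both routes arrive at $(u-1)$ private symbols per untouched block and hence the same rate, but the mask device is cleaner: it makes the secrecy argument a one-liner and sidesteps the issue you yourself flag at the end --- that Alice's quotient direction is determined by her \emph{decoded} $\tilde\alpha_i$, which under an LV mis-decode need not equal Bob's $\alpha_i$, so her projected block need not be independent of the relation Eve actually holds. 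In the paper's version no decoded value enters the definition of Alice's source $s$ beyond the acceptance bit, so this coupling between LV-decoding and secrecy does not arise. Your observation that a mis-decode ``costs at most one further private symbol per block'' is not quite the right fix (it would degrade perfect secrecy to statistical secrecy); the clean fix is exactly the paper's mask.
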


The secure key agreement protocol is given  in Figure \ref{FigKeyAgreementAWTP}.

\begin{center}
{\bf Fig \ref{FigKeyAgreementAWTP}.} Secure Key Agreement Protocol over AWTP channel
\end{center} 
\begin{framed} \label{FigKeyAgreementAWTP}
\begin{enumerate}

\item R1: Alice $\overset{\mathsf{AWTP}}{\longrightarrow}$ 
Bob. For each $i\in n_1$, Alice chooses a vector $r_{i}$  that is 
uniformly distributed over $\bF_q^{u-1}$, and $\beta_{i}$ 
over $\bF_q$. Alice sends  over the  forward AWTP channel to Bob,  the codeword $c_1=(c_{1,1}, \cdots, c_{1,n_1})$ 
where $c_{1,i}=(r_{i}, \beta_{i})$.

\item R2: Bob $\overset{\mathsf{AWTP}}{\longrightarrow}$ Alice. Bob receives $x_1=(x_{1,1}, \cdots, x_{1,n_1})$ with $x_{1,i}=(r'_{i},\beta'_{i})$ f%rom Alice. Bob
and generates 
a vector of random values  $(\alpha_{1}, \cdots, \alpha_{n_1})$ over $\bF_q$. 
Bob generates ${\bf t}=(t_{1}, \cdots, t_{n_1})$ over $\bF_q$ such that, 

\bequation
t_{i}=\mathsf{MAC}(r'_{i}, \alpha_{i})+\beta'_{i}\mod q, \mbox{\;\;  for\;\;} i=1,\cdots, n_1.
\eequation

Bob encodes $(\alpha_{1}, \cdots, \alpha_{n_1},t_{1}, \cdots, t_{n_1})$ into LV adversary code $d_2$ over $\bF_q^u$. 
Bob sends the codeword $d_2$ over 
backward AWTP channel to Alice.

\item R3: Alice $\overset{\mathsf{AWTP}}{\longrightarrow}$ 
Bob. Alice receives $y_2$, and decode into $(\alpha_{1}, \cdots, \alpha_{n_1},t_{1}, \cdots, t_{n_1})$ using LV adversary code decoding algorithm. For each $i=1 \cdots n_1$, Alice checks if,

\bequation
t_{i} \overset{?}{=} \mathsf{MAC}(r_{i}, \alpha_{i})+\beta_{i}\mod q.
\eequation

Alice generates a binary vector $(v_{1}, \cdots, v_{n_1})$ where   $v_{i}=1$ if $(r_{i}, \beta_{i})$ pass the authentication test, and $v_{i}=0$ if not. Alice encodes $(v_{1}, \cdots, v_{n_1})$ into an LV adversary code $c_3$,  and sends it over the forward AWTP channel to Bob.

\item Key Derivation. Alice and Bob use  a key derivation algorithm to generate secret key. 

\begin{itemize}
\item Alice generates a vector $(s_{1},  
\cdots, s_{n_1})$ with $s_i=r_{i}$ if $v_{i}=1$, $s_{i}=0$ if $v_{i}=0$ for $i,=1 \cdots, n_1$. Alice generates a key $k_A$ using randomness extractor,
\begin{equation}
k_A=\ext(s_{1}, \cdots, s_{n_1})
\end{equation}
\end{itemize}

\begin{itemize}
\item Bob receives $x_3$ from Alice, and decodes it  into $(v_{1}, \cdots, v_{n_1})$ using LV adversary code decoding algorithm. Bob generates a vector $(s'_{1},  
\cdots, s'_{n_1})$ with $s'_i=r'_{i}$ if $v_{i}=1$, $s'_{i}=0$ if $v_{i}=0$ for $i,=1, \cdots, n_1$.  Finally Bob uses the 
extractor to generate  a security key,
\bequation
k_B=\ext(s'_{1} 
\cdots, s'_{n_1})
\eequation
\end{itemize}

\end{enumerate} 
\label{fig_keyagreeprotocol}

\end{framed}

\noindent{\bf Secrecy and Reliability}

Let the secret key $k=k_A$. We  %only need to
will  show that the probability that Alice and Bob output % wrong 
different keys is bound by $\delta$;  that is $\bPr(K = K_A \neq K_B)\leq \delta$.
Moreover  and the distribution of the secret key given the adversary's observation, is %the same as uniformly distribution
uniform, that is $\SD(P_{K|Z}, U) = \SD(P_{K_A|Z}, U) = 0$.

\begin{lemma}
The probability that Alice and Bob do not output the same key % output the wrong key
  is bounded by $\delta \leq  \frac{un}{q}$ if $\rho_r + 2\rho_w \leq 1 + \rho$.
\end{lemma}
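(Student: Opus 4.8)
Write $F_2$ for the event that Alice's decoding of $d_2$ differs from the vector $(\alpha_1,\dots,\alpha_{n_1},t_1,\dots,t_{n_1})$ that Bob encoded, and $F_3$ for the event that Bob's decoding of $c_3$ differs from the vector $(v_1,\dots,v_{n_1})$ that Alice encoded. The plan is to show $\{K_A\neq K_B\}\subseteq F_2\cup F_3\cup\bigcup_{i\in S^w_1}E_i$, where $E_i$ is an explicit ``MAC forgery'' event on coordinate $i$, and then to bound each piece. On $\overline{F_3}$ Alice and Bob use the same authentication vector $v$, and since $k_A=\ext(s_1,\dots,s_{n_1})$, $k_B=\ext(s'_1,\dots,s'_{n_1})$ with $s_i=r_i$ (resp.\ $s'_i=r'_i$) when $v_i=1$ and $s_i=s'_i=0$ otherwise, a mismatch $K_A\neq K_B$ on $\overline{F_3}$ forces some index $i$ with $v_i=1$ and $r_i\neq r'_i$. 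Because coordinates outside $S^w_1$ are untouched ($r'_i=r_i$, $\beta'_i=\beta_i$, so the test passes and $s_i=r_i=r'_i=s'_i$), such an $i$ must lie in $S^w_1$, which has at most $\rho_w n_1$ elements.

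Next I would analyse a single corrupted coordinate $i\in S^w_1$. On $\overline{F_2}$ Alice holds the correct pair $(\alpha_i,t_i)$, so $v_i=1$ means $t_i=\mathsf{MAC}(r_i,\alpha_i)+\beta_i$, while Bob defined $t_i=\mathsf{MAC}(r'_i,\alpha_i)+\beta'_i$; subtracting and using that $\mathsf{hash}_{\alpha_i}=\mathsf{MAC}(\cdot,\alpha_i)$ is linear in its message argument gives exactly $\mathsf{hash}_{\alpha_i}(\Delta r_i)=\beta_i-\beta'_i$ with $\Delta r_i=r'_i-r_i\neq 0$. So I take $E_i:=\{\mathsf{hash}_{\alpha_i}(\Delta r_i)=\beta_i-\beta'_i,\ \Delta r_i\neq 0\}$, whence $\overline{F_2}\cap\overline{F_3}\cap\{K_A\neq K_B\}\subseteq\bigcup_{i\in S^w_1}E_i$. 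To bound $\bPr(E_i)$ I condition on the first-round transcript, which fixes $\Delta r_i$ and $\beta_i-\beta'_i$: these form the adversary's round-one error on coordinate $i$, committed \emph{before} Bob samples $\alpha_i$ in round two, whereas $\alpha_i$ is fresh uniform randomness over $\bF_q$ and independent of that error. By the $\frac{u}{q}$-$\Delta$ universality of the $(q,q^{u-1},q)$-hash family (equivalently, $\mathsf{hash}_{\alpha_i}(\Delta r_i)-(\beta_i-\beta'_i)$ is a nonzero polynomial in $\alpha_i$ of degree $\le u-1$), at most $u$ values of $\alpha_i$ satisfy the equation, so $\bPr(E_i)\le u/q$.

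Combining the pieces, $\bPr(K_A\neq K_B)\le\bPr(F_2)+\bPr(F_3)+|S^w_1|\cdot\frac{u}{q}\le\bPr(F_2)+\bPr(F_3)+\frac{un_1}{q}$. By Theorem~\ref{lvac}, which is applicable precisely because the hypothesis $\rho_r+2\rho_w\le 1+\rho$ makes the rate $R_\lv=1+\rho-\rho_r-2\rho_w$ nonnegative so the code is well defined, the two LV-code transmissions of lengths $n_2$ and $n_3$ fail with probability at most $\frac{un_2}{q}$ and $\frac{un_3}{q}$ respectively; summing the three contributions yields $\delta\le\frac{u(n_1+n_2+n_3)}{q}=\frac{un}{q}$. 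The one genuinely delicate point is the independence invoked for $\bPr(E_i)$: even though the adversary may learn something about $\alpha_i$ by reading $d_2$ in round two, it can neither revise its already-committed round-one errors nor, by the LV-reliability captured in $\overline{F_2}$ and $\overline{F_3}$, change what Alice and Bob decode, so the freshness of $\alpha_i$ relative to $(\Delta r_i,\beta_i-\beta'_i)$ is preserved. The remainder is routine bookkeeping of the error terms.
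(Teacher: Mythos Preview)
Your proof is correct and follows essentially the same approach as the paper: decompose the failure event into the two LV-code decoding failures (your $F_2,F_3$; the paper's $\delta_1$) plus per-coordinate MAC forgeries (your $E_i$; the paper's Eq.~(\ref{eq_sads_1})), bound each forgery by $u/q$ via the $\Delta$-universal hash property, and sum. Your version is in fact slightly tidier---you restrict the forgery union to $i\in S^w_1$ and you make explicit the independence of $\alpha_i$ from the already-committed round-one errors---but the structure and the final arithmetic $\frac{un_2}{q}+\frac{un_3}{q}+\frac{un_1}{q}=\frac{un}{q}$ are identical.
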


\begin{proof}

First we consider the case $\rho_r + 2\rho_w \geq 1 + \rho$. If $\rho_r + 2\rho_w \geq 1 + \rho$, we can not use the LV adversary code to transmit messages $(\alpha_1, \cdots, \alpha_{n_1}, t_1, \cdots, t_{n_1})$ and $(v_1, \cdots, v_{n_1})$ reliably to the other party, respectively. This is because the rate of LV adversary code $R_\lv = 1 + \rho_w - \rho_r - 2\rho_w$.  If $\rho_r + 2\rho_w \geq 1 + \rho$, the rate of LV adversary code $R_\lv \leq 0$. Alice and Bob can not receive $(\alpha_1, \cdots, \alpha_{n_1}, t_1, \cdots, t_{n_1})$ and $(v_1, \cdots, v_{n_1})$ except with negligible error. So the Alice and Bob can not generate secret key using key agreement protocol with negligible error probability.

Then we consider the case $\rho_r + 2 \rho_w \leq 1 + \rho$.  
The secret key generated by Alice not equal to Bob happens in two cases:

\begin{enumerate}

\item Alice and Bob decode the correct message $(\alpha_{1}, \cdots, \alpha_{n_1},t_{1}, \cdots, t_{n_1})$, and $(v_{1}, \cdots, v_{n_1})$ from $y_2$ and $x_3$, respectively, 
using LV adversary code decoding algorithm, except with probability at most $\delta_1 \leq \frac{u(n_2 + n_3)}{q}$.

This is because $(\alpha_{1}, \cdots, \alpha_{n_1},t_{1}, \cdots, t_{n_1})$, and $(v_{1}, \cdots, v_{n_1})$ are encoded by LV adversary code. Since $\rho_r + 2 \rho_w \leq 1 + \rho$, the message $(\alpha_{1}, \cdots, \alpha_{n_1},t_{1}, \cdots, t_{n_1})$, and $(v_{1}, \cdots, v_{n_1})$ can be encoded by LV adversary code with rate $R_\lv > 0$. 
From Theorem \ref{lvac}, the probability that receiver (Alice or Bob) does not output the correct messages $(\alpha_{1}, \cdots, \alpha_{n_1},t_{1}, \cdots, t_{n_1})$, and $(v_{1}, \cdots, v_{n_1})$ are bounded by $\frac{un_2}{q}$ and $\frac{un_3}{q}$, respectively. So both parties outputs correct message from $y_2$ and $x_3$, except with probability at most $\frac{u(n_2 + n_3)}{q}$.

\item Given Alice and Bob share the same $(\alpha_1, \cdots, \alpha_{n_1}, t_1, \cdots, t_{n_1})$ and $(v_1, \cdots, v_{n_1})$, the two parties will generate common randomness $(s_{1}, \cdots, s_{n_1}) = ({s}'_{1}, \cdots, {s}'_{n_1})$, except with probability at most $\delta_2 \leq \frac{un_1}{q}$. 

% if Alice and Bob have the same vectors $(\alpha_{1,f}, \cdots, \alpha_{n_1,f},t_{1,f}, \cdots, t_{n_1,f})$, $(\alpha_{1,b}, \cdots, \alpha_{n_1,b},t_{1,b}, \cdots, t_{n_1,b})$, $(v_{1,f}, \cdots, v_{n_1,f})$, and $(v_{1,b}, \cdots, v_{n_1,b})$.

This is from,  
\begin{equation}\label{eq_sads_1}
\begin{split}
&\bPr((s_{1}, \cdots, s_{n_1})\neq ({s}'_{1}, \cdots, {s}'_{n_1}))\\
&\leq \sum_{i=1}^{n_1}\bPr({s}_i\neq {s}'_i)\\
&= \sum_{i=1}^{n_1}\bPr({s}_i\neq {s}'_i, v_{i}=1)  \\
&\leq \sum_{i=1}^{n_1}\bPr({r}_{i}\neq {r}'_{i} \mbox{ and } \mathsf{MAC}(r_{i},\alpha_{i})-\mathsf{MAC}(r'_{i},\alpha_{i}) = \beta_{i}'-\beta_{i} ) \\
&\leq \frac{un_1}{q}
\end{split}
\end{equation}

\end{enumerate}

Since the secret key $k_A$ and $k_B$ are extracted from randomness $(s_1, \cdots, s_{n_1})$ and $(s'_1, \cdots, s'_{n_1})$, the probability that Alice and Bob generate same secret key such that $k_A =k_B$, is bounded by,  

\begin{equation}
\begin{split}
& 1- \delta  = \bPr(K = K_A = K_B ) \\
& = \bPr \Big( \LVACdec(y_2) = (\alpha_{1}, \cdots, \alpha_{n_1},t_{1}, \cdots, t_{n_1}) \mbox{ and } \LVACdec(x_3) = ( v_{1}, \cdots, v_{n_1} ) \\
&\qquad \mbox{ and } (s_1, \cdots, s_{n_1}) = (s'_1, \cdots, s'_{n_1}) \Big) \\
& = \bPr \Big( \LVACdec(y_2) = (\alpha_{1}, \cdots, \alpha_{n_1},t_{1}, \cdots, t_{n_1}) \mbox{ and } \LVACdec(x_3) = ( v_{1}, \cdots, v_{n_1}) \Big) \\
&\qquad  \bPr \Big( (s_1, \cdots, s_{n_1}) = (s'_1, \cdots, s'_{n_1}) \;|\; \LVACdec(y_2) = (\alpha_{1}, \cdots, \alpha_{n_1},t_{1}, \cdots, t_{n_1}) \\
&\qquad \mbox{ and } \LVACdec(x_3) = ( v_{1}, \cdots, v_{n_1}) \Big) \\
& \geq (1 - \delta_1)(1 - \delta_2) \geq (1 - \frac{u(n_2 + n_3)}{q})(1 - \frac{un_1}{q})\\
& \geq 1 - \frac{u(n_2 + n_3)}{q} - \frac{un_1}{q} = 1 - \frac{un}{q}
\end{split}
\end{equation}

So it implies the reliability of key agreement protocol is bounded by $\delta \leq \frac{un}{q}$.

\end{proof}

\begin{lemma}
The key agreement protocol has perfectly secrecy if $\ell \leq (u-1)(1-\rho)n_1 $
\end{lemma}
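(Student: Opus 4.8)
The plan is to show that the adversary's entire view $Z$ of the three-round protocol is statistically independent of the extractor output $k_A = \ext(s_1,\dots,s_{n_1})$, by arguing that conditioned on $Z$, the source $(s_1,\dots,s_{n_1})$ retains at least $\ell\log q$ bits of min-entropy, coming from coordinates that are information-theoretically untouched by Eve. First I would decompose Eve's view into its three pieces: (i) in round R1 she reads $\rho_r n_1$ of the component pairs $c_{1,i}=(r_i,\beta_i)$ on the set $S_1^r$, plus whatever the added noise on $S_1^w$ reveals (nothing new, since the noise is her own choice); (ii) in rounds R2 and R3 she sees the LV-adversary-code words $d_2$ and $c_3$, but these encode $(\alpha_i,t_i)_i$ and $(v_i)_i$, which are either Bob's fresh randomness or functions of the transcript she could already compute given her R1 view; and (iii) the untouched coordinates $i\notin S_1^r$ (of which there are at least $(1-\rho_r)n_1 \ge (1-\rho)n_1$) carry $r_i$ that Eve never observes. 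The key structural point is that for coordinates $i$ with $v_i=1$ and $i\notin S_1^r$, the value $s_i=r_i\in\bF_q^{u-1}$ is uniform and independent of $Z$.

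The second step is to count entropy. Each untouched coordinate that passes the test contributes $(u-1)\log q$ fresh bits; to guarantee the input source to $\ext$ has min-entropy at least $\ell\log q$ it suffices that the number of such coordinates times $(u-1)$ be at least $\ell$. Since at least $(1-\rho)n_1$ coordinates lie outside $S_1^r$, and — by the reliability analysis — a passing coordinate outside $S_1^w$ is untampered, one gets at least roughly $(1-\rho)n_1$ good coordinates, hence min-entropy at least $(u-1)(1-\rho)n_1\log q$. Imposing $\ell \le (u-1)(1-\rho)n_1$ then makes the residual min-entropy at least $\ell\log q$, which is exactly the threshold $r=\ell\log q$ required by the $(un_1,\ell,\ell\log q,0)$-extractor of Section~\ref{sec_extractor}. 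Because that extractor has error $0$ (it is the Reed--Solomon symbol-fixing extractor, which is \emph{exactly} uniform on symbol-fixing sources, and conditioned on $Z$ the source $(s_1,\dots,s_{n_1})$ is precisely a symbol-fixing source — the leaked/erased coordinates are fixed, the rest uniform), we conclude $\SD(P_{K\mid Z},U)=0$, i.e.\ perfect secrecy.

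The main obstacle, and the part that needs the most care, is justifying that conditioning on the \emph{whole} adversarial view — including $d_2$ and $c_3$ — does not leak information about the $r_i$ on untouched coordinates. One must check that $\alpha_i$ is freshly and independently chosen by Bob, that the MAC tag $t_i = \mathsf{MAC}(r'_i,\alpha_i)+\beta'_i$ on an untouched coordinate equals $\mathsf{MAC}(r_i,\alpha_i)+\beta_i$, and that $\beta_i$ acts as a one-time pad masking the tag, so $t_i$ (hence $d_2$) carries no information about $r_i$ beyond what a uniform value would; similarly $v_i$ is determined by the transcript Eve already sees. I would phrase this as: condition on all of Eve's R1 readings and all of Bob's randomness, and show the conditional distribution of $(r_i)_{i\notin S_1^r}$ is still uniform, so that the subsequent LV-code transmissions — being deterministic functions of already-fixed quantities — change nothing. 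Once that independence is nailed down, the entropy count and the zero-error extractor give the claim immediately.
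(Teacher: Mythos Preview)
Your proposal is correct and follows essentially the same route as the paper's proof: enumerate Eve's view (the $\rho_r n_1$ read pairs $(r_i,\beta_i)$ together with the full contents $(\alpha_i,t_i)_i$ and $(v_i)_i$ of the LV-encoded rounds), argue that the $r_i$ on the $\geq(1-\rho)n_1$ untouched coordinates remain uniform conditioned on this view, and then invoke the zero-error symbol-fixing extractor once the min-entropy threshold $(u-1)(1-\rho)n_1\log q \ge \ell\log q$ is met. Your explicit treatment of $\beta_i$ as a one-time pad masking $t_i$ is in fact more careful than the paper, which simply lists $t_1,\dots,t_{n_1}$ in $Z$ and asserts the unread $r_i$'s are ``perfectly secure'' without spelling out why the tags carry no information about them.
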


\begin{proof}

To show the prefect security of key agreement protocol, we assume the adversary reads on the last $\rho_rn_1$ fraction of codeword. The general adversary attacking can be proved similarly.

First, the vector $(r_1, \cdots, r_{(1 - \rho_r)n_1})$ is perfectly secure for any adversary's observation $z$.
Since adversary reads $\rho_r$ fraction of codeword in first round, and read the message encoded by LV adversary code in the second and third, the adversary's observation is no more than the following set of components,
\bequation
Z=\Big\{ r_{(1-\rho_r)n_1+1}\cdots r_{n_1}, \beta_{(1-\rho_r)n_1+1}\cdots \beta_{n_1},\alpha_{1}, \cdots, \alpha_{n_1},t_{1}, \cdots, t_{n_1}, v_{1}, \cdots, v_{n_1} \Big\}
\eequation
For the set of components $\Big\{ r_1, \cdots, r_{(1 - \rho_r)n_1} \Big\}$, it is perfectly secure for any adversary's observation. It implies the vector $(r_1, \cdots, r_{(1 - \rho_r)n_1})$ has min-entropy at least $\ell \log q$.

Second, since vector $(s_1, \cdots, s_{n_1})$ is generated from $(r_1, \cdots, r_{n_1})$, which has min-entropy at least $\ell \log q$, it implies $(s_1, \cdots, s_{n_1})$ also has min-entropy at least $\ell \log q$.

Finally, since the $k_A$ is generated from $(s_1, \cdots, s_{n_1})$ using randomness extractor, and $(s_1, \cdots, s_{n_1})$ is $(n_1, \ell)$ symbol-fixing source, it implies  the secret key $\SD(K_A | Z) = \SD(U)$. 
 
\qed
\end{proof}

\noindent{\bf Rate of Key Agreement Protocol}

\begin{lemma}
The rate of key agreement protocol approaches $R_{\sk}=1-\rho$.
\end{lemma}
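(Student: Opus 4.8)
The plan is to show that the three-round protocol communicates $n = n_1 + n_2 + n_3$ symbols over $\Sigma = \bF_q^u$ while the key $k_A$ has length $\ell$ over $\bF_q$, and that by choosing the round lengths appropriately the ratio $\frac{\ell \log q}{n \cdot u \log q} = \frac{\ell}{un}$ can be pushed arbitrarily close to $1-\rho$. First I would recall the constraints. The key length must satisfy $\ell \leq (u-1)(1-\rho)n_1$ for perfect secrecy (the previous lemma). The second-round message encodes $(\alpha_1,\dots,\alpha_{n_1},t_1,\dots,t_{n_1})$, i.e. $2n_1$ symbols of $\bF_q$, which is $k = \frac{2n_1}{u}$ symbols of $\Sigma$; sending it by the LV adversary code of rate $R_\lv = 1+\rho-\rho_r-2\rho_w$ costs $n_2 = \frac{k}{R_\lv} = \frac{2n_1}{uR_\lv}$ symbols of $\Sigma$. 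The third-round message encodes the binary vector $(v_1,\dots,v_{n_1})$, which fits in $\lceil n_1/\log_2 q\rceil$ symbols of $\bF_q$, hence $n_3 = O\!\left(\frac{n_1}{u\log_2 q \cdot R_\lv}\right)$ symbols of $\Sigma$; in particular $n_3 = o(n_1)$ once $u$ and $q$ are moderately large, and one can even absorb $v$ into the second-round transmission.

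Next I would compute the rate. We have $n = n_1 + n_2 + n_3 = n_1\left(1 + \frac{2}{uR_\lv} + o(1)\right)$, so
\begin{equation}
R_\sk = \frac{\ell}{un} \leq \frac{(u-1)(1-\rho)n_1}{u n_1\left(1 + \frac{2}{uR_\lv} + o(1)\right)} = (1-\rho)\cdot\frac{u-1}{u}\cdot\frac{1}{1 + \frac{2}{uR_\lv} + o(1)}.
\end{equation}
Setting $\ell = \lfloor(u-1)(1-\rho)n_1\rfloor$ makes this an equality up to lower-order terms. As $u \to \infty$ (with $q$ growing so that the error $\delta \leq \frac{un}{q}$ stays below $\xi$, e.g. $q = \Theta(un/\xi)$), the factors $\frac{u-1}{u}$ and $\frac{1}{1+2/(uR_\lv)}$ both tend to $1$, so $R_\sk \to 1-\rho$. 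Concretely, given $\xi > 0$, I would pick $u = \Theta(1/\xi)$ large enough that $(1-\rho)\frac{u-1}{u}\cdot\frac{1}{1+2/(uR_\lv)} \geq 1-\rho-\xi$, then pick $q$ polynomially large in $n$ so that $\delta = \frac{un}{q} < \xi$; this also matches the claimed alphabet size $|\Sigma| = q^u = \cO(q^{1/\xi})$ stated in the construction's theorem.

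The main obstacle — really the only subtlety — is bookkeeping the interplay between the three parameters: $u$ controls both the secrecy slack (the $\frac{u-1}{u}$ factor) and the relative cost of the LV-code overhead (the $\frac{2}{uR_\lv}$ term), while $q$ controls the reliability error but must be kept polynomial in $n$ so the alphabet and the computation ($\cO((n\log q)^2)$) stay within the claimed bounds, and one needs $R_\lv > 0$, i.e. $\rho_r + 2\rho_w < 1+\rho$, which is exactly the hypothesis of the theorem. I would also need to confirm that the third-round cost $n_3$ is genuinely negligible relative to $n_1$ — this is where writing $(v_1,\dots,v_{n_1})$ as a bit-vector packed into $\bF_q$-symbols and then LV-encoded matters — so that it does not contribute to the leading-order rate. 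Once these choices are made consistently, the bound $R_\sk \geq 1-\rho-\xi$ together with $\delta < \xi$ for all sufficiently large $n$ is exactly Definition~\ref{def_keyagreefamily}, so the rate $1-\rho$ is achievable, matching the upper bound of Theorem~\ref{the_boundkey}.
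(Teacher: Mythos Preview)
Your proposal is correct and follows essentially the same approach as the paper's proof. The only minor difference is that the paper crudely bounds $n_3$ by $\frac{2n_1}{uR_\lv}$ (the same as $n_2$) rather than exploiting bit-packing of $(v_1,\dots,v_{n_1})$, and then makes the explicit choices $u \geq \frac{1}{\xi} + \frac{4}{\xi R_\lv}$, $q \geq 2un^2$, $\ell = (u-1)(1-\rho)n_1$, arriving at $\frac{\log|\cK|}{n\log|\Sigma|} = \frac{u-1}{u+4/R_\lv}(1-\rho) \geq 1-\rho-\xi$ and $\delta \leq \frac{un}{q} \leq \frac{1}{2n} \leq \xi$; the structure of the argument is otherwise identical to yours.
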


\begin{proof}
For a small $\xi \geq 0$, let the parameter be chosen as $u \geq \frac{1}{\xi} + \frac{4}{\xi R_\lv}$, $q \geq 2u n^2$, $\ell = (u - 1)(1 - \rho) n_1$, $n_0 \geq u$, and $\Sigma = \bF_q^u$. Let $R_\sk = 1 - \rho$. 
For any $n \geq n_0$, the rate of secure key agreement protocol family is given by,

\begin{equation}
\begin{split}
\frac{\log|\cK|}{n\log |\Sigma|} &= \frac{\ell \log q}{un \log q} \\
&= \frac{(u - 1)(1 - \rho) n_1 \log q}{(n_1 + n_2 + n_3)u\log q}\\
&= \frac{(u - 1)(1 - \rho) n_1 \log q}{(n_1 + \frac{2n_1}{u R_\lv} + \frac{2n_1}{u R_\lv})u\log q}\\
&= \frac{ u - 1 }{u + \frac{4 }{ R_\lv}}(1 - \rho)\\
&\geq 1 - \rho -\xi = R_\sk - \xi
\end{split}
\end{equation}

The decoding error probability is bounded by,

\bequation
\delta \leq \frac{un}{q} \leq \frac{1}{2n} \leq \xi
\eequation

From Definition \ref{def_keyagreefamily}, the rate of secure key agreement is $R_\sk = 1 - \rho$.

\end{proof}

\subsection{An SKA Protocol with Strong Reliability over AWTP-PD Channel}

We introduce the key agreement protocol with strong reliability over AWTP-PD channel. Both 
AWTP channel and pubic discussion is over  alphabet $
\Sigma=\bF_q^u$, where $q$ is a prime, and $u$ is an 
integer. The key agreement protocol has three rounds, uses AWTP 
channel once  and teh public discussion channel twice.

The construction of key agreement protocol is similar to the key agreement protocol over AWTP channel (Fig \ref{FigKeyAgreementAWTP}). Since the communication is over PD channel,
 after the first round of %key agreement
 the  protocol, Bob can directly transmit $(\alpha_1, \cdots, \alpha_{n_1}, t_1, \cdots, t_{n_1})$ to Alice in the second round, and Alice can also directly transmit $(v_1, \cdots, v_{n_1})$ to Bob in the third round, without using LV adversary code. 
 The
  difference between key agreement protocol over AWTP-PD channel and key agreement protocol over interactive AWTP channel is 
 that   messages are  directly transmitted in the second and third round of  the protocol.
 This means that the condition %The key agreement protocol over AWTP-PD channel also have no need on condition of LV adversary code, that is
  $\rho_r + 2\rho_w < 1 + \rho$ that was imposed by the LV-code will not be required.
  %, since the message are transmit over PD channel.

\begin{theorem}
The key agreement protocol has rate $R_\sk=1-\rho-\xi$ over  transmission alphabet for the AWTP channel is of size $|\Sigma| = \cO(q^{\frac{1}{\xi}})$, 
 The computation complexity is $\mathcal{O}((n\log q)^2)$. 
\end{theorem}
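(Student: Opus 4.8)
The plan is to reuse the analysis of the interactive protocol of Section~\ref{sec_sk_constr} almost verbatim, exploiting that rounds two and three now travel over the authenticated PD channel. Concretely, Bob sends $(\alpha_1,\dots,\alpha_{n_1},t_1,\dots,t_{n_1})$ and then Alice sends $(v_1,\dots,v_{n_1})$ directly over the PD with no LV encoding, so the later round lengths become $n_2,n_3=\cO(n_1/u)$ and the total length is $n=n_1\bigl(1+\cO(1/u)\bigr)$; I would keep $\ell=(u-1)(1-\rho)n_1$, $\Sigma=\bF_q^u$, $u=\Theta(1/\xi)$ and $q\ge 2un^2$ as in Figure~\ref{FigKeyAgreementAWTP}. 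Because the LV adversary code is never invoked, the side condition $\rho_r+2\rho_w<1+\rho$ simply disappears; this is the only structural difference, so the three-round protocol's rate, secrecy, reliability and complexity claims should follow by the same three steps.

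\emph{Reliability.} Since the PD is authenticated, Alice and Bob receive $(\alpha_i,t_i)_i$ and $(v_i)_i$ without error, so the only event producing $k_A\ne k_B$ is $s_i\ne s'_i$ for some $i$, which forces $v_i=1$ while $r_i\ne r'_i$. The bit $v_i=1$ means the tag equation $\MAC(r_i,\alpha_i)-\MAC(r'_i,\alpha_i)=\beta'_i-\beta_i$ holds; but the adversary fixed $e_i=r'_i-r_i$ during round one, before Bob sampled $\alpha_i$, so by Lemma~\ref{le_mac} this has probability at most $(u-1)/q$ for each $i$. A union bound over the $n_1$ indices gives $\delta\le (u-1)n_1/q\le un/q\to 0$.

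\emph{Secrecy.} I would argue, as in the interactive case, for a worst-case read pattern (say the last $\rho_r n_1$ coordinates of round one) and observe that Eve's view $Z$ consists of $\{r_i,\beta_i\}_{i\in S_r}$ together with all of $(\alpha_i,t_i)_i$ and $(v_i)_i$, which are now public. For a clean index $i\notin S_r\cup S_w$: $\beta_i$ is unread and uniform, so $t_i=\MAC(r_i,\alpha_i)+\beta_i$ is a one-time-pad masking of $r_i$ and leaks nothing about it; and since $i$ is uncorrupted, $r'_i=r_i$, the check passes, and $v_i=1$ deterministically. Hence the $(u-1)(1-\rho)n_1$ field-coordinates $\{r_i:i\text{ clean}\}$ are jointly uniform given $Z$, so $(s_1,\dots,s_{n_1})\mid Z$ is a symbol-fixing source of min-entropy $\ell\log q$, and the seedless RS-extractor (error $0$) yields $\SD(P_{K_A\mid Z},U)=0$. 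Corrupted-but-accepted positions, whose $r_i$ may be partly known to Eve, are simply not counted toward this min-entropy bound.

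\emph{Rate, complexity, and the hard part.} With $n=n_1(1+\cO(1/u))$ one gets $R_\sk=\frac{\ell\log q}{n\,u\log q}=\frac{u-1}{u+\cO(1)}(1-\rho)\ge 1-\rho-\xi$ once $u=\Theta(1/\xi)$, while $\delta\le un/q\le 1/(2n)\le\xi$, so by Definition~\ref{def_keyagreefamily} the achievable rate is $1-\rho$ and $|\Sigma|=q^u=\cO(q^{1/\xi})$; the only super-linear work is one interpolation plus $\ell$ evaluations of a degree-$\le(u-1)n_1$ polynomial for the extractor and $n_1$ degree-$(u-1)$ MAC evaluations, all $\cO((n\log q)^2)$ bit operations. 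The step I expect to need the most care is secrecy: unlike the interactive protocol, the tags $t_i$ and all verification bits $v_i$ are fully public, so one must verify precisely that these quantities are information-theoretically independent of the clean $r_i$'s — the $\beta_i$-masking of $t_i$ and the determinism of $v_i$ on uncorrupted positions are exactly what makes this go through — and the clean argument has to be carried out for an arbitrary adversarial read/write pattern, not merely the worst case used above for intuition.
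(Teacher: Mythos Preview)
Your proposal is correct and takes essentially the same approach the paper sketches: reuse the three-round protocol of Section~\ref{sec_sk_constr} with rounds two and three sent in the clear over the PD (so the LV code and its side condition $\rho_r+2\rho_w<1+\rho$ disappear), then repeat the reliability, secrecy, and rate arguments with $n_2,n_3=\cO(n_1/u)$. The paper defers the proof to its full version but explicitly states that this is the only modification, so your plan matches it; your secrecy step is in fact slightly more explicit than the paper's interactive analysis, since you spell out the $\beta_i$-masking of $t_i$ and the determinism of $v_i$ on clean positions, whereas the paper's interactive proof already grants the adversary all of $(\alpha_i,t_i,v_i)$ and argues directly from that.
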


Proof is in the full version of the paper. It is omitted because f space.

\subsection{An SKA Protocol   with weak relaibility}%l with Weak Reliability}

%In this section, we consider the secure key agreement protocol with
We consider  weak reliability. % for AWTP channel or over AWTP-PD channel. 
The key agreement protocol is  one round.
% Since Alice and Bob does not need to communicate each other after the first round transmission over AWTP channel, we can use the same construction for key agreement protocol with weak reliability over AWTP channel and over AWTP-PD channel.

The construction uses % of secure key agreement protocol with weak reliability uses 
AMD codes and randomness extractors.  The proof is in the full version of the paper.

\begin{theorem}
The key agreement protocol in \ref{fig4} has one round over  AWTP channel, and achieves rate $R_\sk = 1 - \rho_r$.
The alphabet size  is $|\Sigma| = \cO(q^{\frac{1}{\xi}})$. The protocol has   polynomial time  computation. 

\end{theorem}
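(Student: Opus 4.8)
The plan is to verify the three requirements of Definition~\ref{def_keyagree} for the one-round protocol of \ref{fig4}: perfect secrecy, weak reliability with error probability $\delta\le\xi$, and achievability of rate $1-\rho_r-\xi$ with alphabet $|\Sigma|=\cO(q^{1/\xi})$. The protocol has Alice draw a fresh uniform vector $X$ over $\bF_q$, protect $X$ with the systematic AMD code of~(\ref{EqAMDFunction}), and send the resulting word over the forward \rw-AWTP channel; Bob runs $\mathsf{AMDdec}$, outputs $\perp$ on rejection, and otherwise feeds the recovered $X'$ into the perfect seedless symbol-fixing extractor of Section~\ref{sec_extractor} (an $(\cdot,\ell,\ell\log q,0)$-extractor) to obtain $K_B$, while Alice applies the same extractor to her own $X$ to obtain $K_A$. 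A second round is unnecessary because, under weak reliability, Bob is allowed to abort rather than correct the channel's errors. Throughout, the AMD randomness (and only it) must be kept unpredictable to Eve; this is arranged inside the construction so that no $\rho_r$-fraction of transmitted symbols reveals it, and it is the one point on which the proof really turns.

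\textbf{Secrecy.} Of the $n$ transmitted symbols at most $\rho_r n$ lie in Eve's reading set, so at least $(1-\rho_r)n$ of them---carrying at least $\ell\log q$ fresh uniform field elements of $X$---are never observed by her. I would show that, conditioned on Eve's entire view $Z$, the vector $X$ is still a symbol-fixing source of min-entropy at least $\ell\log q$: the only channel from the unread coordinates into $Z$ runs through the AMD tag, and since the AMD randomness is unpredictable to Eve the tag is, conditioned on $Z$, statistically independent of the unread coordinates up to the $O(1)$ symbols of overhead. Feeding this source into the perfect $(\cdot,\ell,\ell\log q,0)$-extractor then gives $\SD(P_{K_A|Z},U)=0$; this is exactly the step that pins $\ell$ down to $(1-\rho_r)n$ up to lower-order terms, and it is what produces the factor $1-\rho_r$ in the rate.

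\textbf{Weak reliability.} If Eve injects the zero error vector---in particular if she is passive, $S_w=\emptyset$---then $\mathsf{AMDdec}$ returns $X$ verbatim and $K_B=K_A$, which also settles the correctness clause. If she injects a nonzero error $e$ with $\mathsf{SUPP}(e)\subseteq S_w$, this is precisely an additive algebraic manipulation of the AMD word, so by Lemma~\ref{le_amd} the decoder rejects except with probability $\frac{d+1}{q}$, \emph{provided $e$ was chosen obliviously of the AMD randomness}. The obstacle, and where the proof does its real work, is that Eve is adaptive: she fixes $e$ only after reading a $\rho_r$-fraction of the symbols, so one must argue that this view leaves the AMD randomness unpredictable---otherwise she could solve for an $e$ that keeps the word consistent and make Bob accept a wrong $X'$ without aborting. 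Granting the hiding property, I would condition on $Z$, average over the AMD randomness, and apply Lemma~\ref{le_amd} to conclude that the probability $\mathsf{AMDdec}$ outputs some $X'\notin\{X,\perp\}$ is at most $\frac{d+1}{q}$. Since $K_B\neq K_A$ forces Bob to have accepted some $X'\neq X$, this bounds Eve's winning probability, $\bPr(K_A\neq\perp,\ K_B\neq\perp,\ K_A\neq K_B)\le\frac{d+1}{q}=:\delta$, which is the bound in~(\ref{eq_wrel}).

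\textbf{Rate, alphabet and efficiency.} The key is $\ell$ elements of $\bF_q$ and the transcript is $n$ symbols of $\Sigma=\bF_q^{u}$, so $R_\sk=\frac{\ell\log q}{n\,u\log q}$, which tends to $1-\rho_r$ once the $O(1)$ AMD and masking overhead is amortized; taking $u=\Theta(1/\xi)$ makes the rate at least $1-\rho_r-\xi$, and taking $q$ polynomial in $n$ and large enough that $\frac{d+1}{q}\le\xi$ gives $\delta\le\xi$ and $|\Sigma|=q^{u}=\cO(q^{1/\xi})$, matching Definition~\ref{def_keyagreefamily}. That the rate is optimal follows from the upper bound $R_\sk\le 1-\rho_r$ for weakly reliable protocols in Theorem~\ref{TheBoundKeyPD}. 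Finally, $\mathsf{AMDenc}$ and $\mathsf{AMDdec}$, and the Reed--Solomon-based extractor, are only polynomial interpolation and evaluation over $\bF_q$, so both parties run in time polynomial in $n\log q$. To repeat, the genuinely delicate point is the adaptive-adversary step in the reliability argument; once one knows that a $\rho_r$-fraction of adaptive reads cannot compromise the AMD check, the rest is bookkeeping.
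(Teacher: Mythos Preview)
The paper defers this proof to its full version, so there is nothing to compare line by line; your plan---AMD for tamper detection, the RS symbol-fixing extractor for privacy, then the $u=\Theta(1/\xi)$ rate calculation---is the intended shape. Two corrections are needed, though.

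First, you have misread the construction in Figure~\ref{fig4}. It does not wrap the payload in a \emph{single} AMD encoding whose randomness is somehow ``arranged'' to be invisible to a $\rho_r$-reader; you never say how that arrangement would work, and a single field element of randomness sitting in one symbol cannot be hidden from an adaptive reader who may choose that symbol. Instead, each transmitted symbol $c_i=(s_i,r_i,t_i)$ carries a \emph{fresh} nonce $r_i\in\bF_q$ together with a tag $t_i=f(s,r_i)$ that authenticates the \emph{entire} payload $s=(s_1,\ldots,s_n)$. Eve openly learns $r_j$ for every $j\in S_r$; the point is that for each $i\notin S_r$---and $\rho_r<1$ guarantees such positions exist---the value $r_i$ remains uniform given her view. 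If her additive error changes $s$ to some $s'\neq s$, then at every such $i$ Bob's check $t'_i\stackrel{?}{=}f(s',r'_i)$ is an AMD verification against randomness independent of Eve's choices, and Lemma~\ref{le_amd} makes it fail except with probability $(d+1)/q$; this holds whether or not $i\in S_w$. So the adaptive-reader obstacle you correctly flag is resolved by \emph{replicating} the AMD randomness across all $n$ symbols, not by hiding one copy of it, and this is exactly why the rate depends only on $\rho_r$ and not on $\rho$.

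Second, your secrecy step asserts that, conditioned on Eve's view, the payload is a symbol-fixing source ``up to $O(1)$ overhead''. That is not accurate for this construction. Because each observed tag $t_j=f(s,r_j)$ is linear in the whole payload $s$, the $\rho_r n$ tags Eve reads impose $\rho_r n$ linear constraints on the unread blocks of $s$; the conditional source is therefore an \emph{affine} source, and the leakage is of order $\rho_r n$ field elements, not $O(1)$. This is only a $1/u$ fraction of the available entropy and is absorbed by the $\xi$ slack once $u=\Theta(1/\xi)$, but you must either argue that the RS-based extractor handles this particular affine family or budget the extra $\rho_r n$ loss explicitly when choosing $\ell$. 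The asymptotic rate $1-\rho_r$ survives either way, but the ``symbol-fixing'' claim as you state it does not.
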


\begin{center} \label{fig4}
{\bf Fig. \ref{KeyAgreementWeakReliability}} Secure Key Agreement Protocol with Weak Reliability. 
\end{center}

\begin{framed}\label{KeyAgreementWeakReliability}

Alice does the following:
\begin{enumerate}

\item Chooses a vector $s = (s_1, \cdots, s_n)$, that is uniformly distributed over $\bF_q^{u - 2}$.

\item Chooses a vector $(r_1, \cdots, r_n)$ that is uniformly distributed over $\bF_q$, and 
generates $(t_1, \cdots, t_n)$ using AMD code  (Eq \ref{EqAMDFunction}),
\begin{equation}
t_i = f(s, r_i) \mod q
\end{equation}

\item Sends the codeword $c = (c_1, \cdots, c_n)$ over $\bF_q^u$ with $c_i = (s_i, r_i, t_i)$, to Bob over AWTP channel. 

\item Alice generate $k_A$ using randomness extractor.
\begin{equation}
k_A=\ext(s_{1}, \cdots, s_{n_1})
\end{equation}

\end{enumerate}

Bob does the following:
\begin{enumerate}

\item Receives the word $x = (x_1, \cdots, x_n)$ with $x_i = (s'_i, r'_i, t'_i)$ and checks if $x$ is tampered by Eve by checking:
\begin{equation}
t'_i \overset{?}{=} f(s', r'_i)
\end{equation}

\item Output $\perp$ if $x$ is tampered by Eve. Otherwise, Bob generates $k_B$ using randomness extractor. 
\begin{equation}
k_B=\ext(s'_{1}, \cdots, s'_{n_1})
\end{equation}

\end{enumerate}

\end{framed}

The above protocol shows that under weak reliability, very efficient key agreement protocols can be constructed.

\section{Concluding remarks}
We motivated and defined a new setting for key agreement protocols where the adversary partially controls the communication 
channel, and interaction over this channel is  the only resource of the adversary.
Previous works had considered the cases that the channel was fully authenticated, or fully corrupted. In such 
a setting channel by itself cannot be the only resource for establishing a secret shared key: in the former
case no secrecy for the can be provided, and in the latter no guarantee on communication.
All protocols in these settings assume prior dependent variables as communicants' resource for establishing a shared key.
In our setting, the limited control of the adversary makes the channel a resource for extracting a shared key.
We formalized the model, derived the secret key rate bounds, and gave constructions that achieve the bounds.

There are numerous open questions that follow form this work. First and foremost, construction of protocols for small alphabets.
Our constructions although have constant size  alphabet, but the alphabet size depends on how close  the rate of the protocol 
is to the upper bound.  The alphabet size determines granularity of the physical layer adversaries.  In network setting, each component of a protocol message (codeword) will be sent over a path and so larger size  alphabets could be acceptable. In wireless communication however, 
the alphabet size must be  reduced.

Secondly, we defined leakage and corruption as constant ratios of the transmitted word.
One can consider other measures of leakage and corruption to limit the adversary's power.

Thirdly, we motivated the use of physical layer properties of communication systems for providing security  against massive surveillance systems.
We showed partially controlled physical environments 
 can be used to establish shared secret keys between two participants.
Designing other cryptographic primitives that use partial access of the adversary to the physical resources of a system is an
interesting direction for future work.

Finally,
the three round protocol in Section \ref{sec_sk_constr}   has the requirement $\rho_r+2\rho_w\leq 1+\rho$ among parameters. 
Achieving the bound without this requirement, and finding the minimum  number of rounds for  protocols 
with similar property (achieve the upper bound), are open problems.

\remove{
\begin{enumerate}
\item The first open question is how to construct key agreement protocol over AWTP without the requirement $\rho_r+\rho_w<1$. 
\item The second open question is how to construct key agreement protocol over general adversarial wiretap channel. 
\end{enumerate}
  }

\bibliographystyle{abbrv}
\bibliography{1.bib}

\remove{
\bibitem{Mau93} Maurer,  ``Secrey Key Agreement by Public Discussion from Common Information", \emph{IEEE Trans. on IT}, 1993.
\bibitem{Mau96}   Maurer, The strong secret key rate of discrete random triplets, communications and cryptography.
\bibitem{Mau97} Information Theoretically Secure Secret-Key Agreement by NOT Authenticated Public Discussion?,  \emph{Eurocrypt},1997.
%[MW97] 
\bibitem{MW97}
Ueli M. Maurer and StefanWolf. Privacy amplification secure against active adversaries. In CRYPTO,
pages 307?321, 1997.
\bibitem{MW03}
 Ueli M. Maurer and Stefan Wolf. Secret-key agreement over unauthenticated public channels iii:
Privacy amplification. IEEE Transactions on Information Theory, 49(4):839?851, 2003.
\bibitem{Bra} Bennett, Brassard, Crepeau, Maurer, ``Generalized Privacy Amplification",  \emph{IEEE Trans. on IT}, 1995.
\bibitem{Wyn} Wyner, ``Wiretap channel"
b
\bibitem{CFH15} Eric Chitambar and Benjamin Fortescue and Min-Hsiu Hsieh, ``Distributions Attaining Secret Key at a Rate of the Conditional Mutual Information", CRYPTO, 2015.
\end{thebibliography}
}

\begin{appendix}
%%%%%% PROOF OF LEMMA 1
\section{Proof of Lemma \ref{le_sec}}\label{ap_le_sec} 
\begin{proof}
1). First, we show $\sfI(K; C^{\ell, r} D^{\ell, r}) \leq 4\epsilon  n\log(\frac{|\Sigma|}{\epsilon})$. 

The proof uses Pinsker's Lemma:
\begin{lemma}
Let $P$, $Q$ be probability distributions. Let $ \SD(P , Q)\leq \epsilon$. Then 
\[
\sfH(P ) - \sfH(Q) \leq 2\epsilon \cdot \log(\frac{|P\cup Q|}{\epsilon})
\]
\end{lemma}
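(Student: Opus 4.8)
The plan is to prove this continuity estimate coordinate by coordinate over the union of the supports. Write $N=|P\cup Q|$ for the size of that union and let $f(t)=-t\log t$ with the convention $f(0)=0$, so that $\sfH(P)-\sfH(Q)=\sum_{x}\big(f(P(x))-f(Q(x))\big)$, the sum running over the $N$ points where $P$ or $Q$ is nonzero. By the triangle inequality it suffices to bound $\sum_x |f(P(x))-f(Q(x))|$, and the proof will reduce this to a single one-variable inequality plus concavity.

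The technical heart is a pointwise estimate for $f$: for $a,b\in[0,1]$ with $|a-b|\le 1/e$ one has $|f(a)-f(b)|\le f(|a-b|)$. I would prove this by assuming $b\le a$, setting $\delta=a-b$, and writing $f(a)-f(b)=b\log\frac{b}{b+\delta}+\delta\log\frac{1}{b+\delta}\le \delta\log\frac1\delta=f(\delta)$, using $b+\delta\ge \delta$ and $\frac{b}{b+\delta}\le 1$; the reverse side $f(b)-f(a)\le b\log(1+\delta/b)\le (\delta/b)\cdot b\log e=\delta\log e$ is $\le f(\delta)$ precisely once $\delta\le 1/e$, which is the only regime we need since $\epsilon$ is small. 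To apply this to every coordinate I must first check that the gaps $\delta_x:=|P(x)-Q(x)|$ are all small: since $\sum_x(P(x)-Q(x))=0$, the positive and negative parts of $P-Q$ each carry mass exactly $\|P-Q\|_1/2=\SD(P,Q)\le\epsilon$, hence $\delta_x\le\epsilon\le 1/e$ for every $x$ in the relevant range of $\epsilon$. This gives $\sum_x|f(P(x))-f(Q(x))|\le\sum_x f(\delta_x)$.

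To finish I would use concavity of $f$ together with $f(0)=0$: Jensen's inequality on the $N$ values $\delta_x$ yields $\sum_x f(\delta_x)\le N\,f\!\left(\frac1N\sum_x\delta_x\right)$, and with $\Delta:=\sum_x\delta_x=\|P-Q\|_1\le 2\epsilon$ the right-hand side equals $\Delta\log\frac N\Delta$. Since $t\mapsto t\log(N/t)$ is increasing on $(0,N/e)$ and $2\epsilon\le N/e$ for $N\ge 2$ and small $\epsilon$, we get $\Delta\log\frac N\Delta\le 2\epsilon\log\frac{N}{2\epsilon}\le 2\epsilon\log\frac N\epsilon=2\epsilon\log\frac{|P\cup Q|}{\epsilon}$, which is exactly the claimed bound.

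The only real obstacle is the pointwise estimate: $f$ is not monotone on $[0,1]$ and $|f(a)-f(b)|\le f(|a-b|)$ genuinely fails when $|a-b|$ is close to $1$, so one cannot skip the step that bounds each coordinatewise gap $\delta_x$ — that is where the fact that $P$ and $Q$ are probability distributions (not arbitrary vectors) is used. Everything after that reduction is concavity and a single derivative computation. If the paper's $\SD$ denotes the full $\ell_1$ distance rather than half of it, then $\Delta\le\epsilon$ directly and the same final bound holds with room to spare.
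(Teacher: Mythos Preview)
The paper does not actually prove this lemma: it is quoted inside the proof of Lemma~\ref{le_sec} as a known tool (the paper labels it ``Pinsker's Lemma,'' though it is really a Fannes--Audenaert type continuity bound for Shannon entropy, not Pinsker's inequality). So there is no paper proof to compare against; your task was effectively to supply the missing argument, and you have done so correctly.

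Your argument is the standard route to the Fannes bound: the coordinatewise estimate $|f(a)-f(b)|\le f(|a-b|)$ for $f(t)=-t\log t$ and $|a-b|\le 1/e$, followed by Jensen on the concave function $f$ to collapse the sum $\sum_x f(\delta_x)$ to $\Delta\log(N/\Delta)$ with $\Delta=\|P-Q\|_1\le 2\epsilon$. The observation that each coordinate gap satisfies $\delta_x\le\SD(P,Q)\le\epsilon$ (because the positive and negative parts of $P-Q$ each have total mass $\SD(P,Q)$) is exactly the right way to justify applying the pointwise estimate, and you are right to flag that this step uses the fact that $P,Q$ are probability vectors. The monotonicity of $t\mapsto t\log(N/t)$ on $(0,N/e)$ then gives the final bound. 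All steps are sound, with the usual mild proviso $\epsilon\le 1/e$ (or $2\epsilon\le N/e$) that is implicit in any Fannes-type statement and is harmless in the paper's application.
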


From the definition of secrecy of key agreement (Eq. (\ref{eq_sec})), we have, 
\[
\SD(P_{K|Z}, U)\leq \epsilon
\]
From Pinsker's lemma and adversarial reading sets $Z=(C^{\ell, r} D^{\ell, r})$, it implies,
\bequation \label{eq_sec1}
\sfH(U)-\sfH(K|C^{\ell, r} D^{\ell, r}) \leq 2\epsilon\cdot \log(\frac{|\cK|}{\epsilon})
\eequation

Since $U$ is the uniform distribution over $\cK$, it implies, $\sfH(K)\leq \log |\cK|=\sfH(U)$. Since Alice and Bob's randomness $r_A$ and $r_B$ are not correlated, it implies $|\cK|\leq |\Sigma|^{2n}$. So there is,
\begin{equation} \label{eq_sec2}
\begin{split}
\sfI(K; C^{\ell, r} D^{\ell, r})&\leq \sfH(K)-\sfH(K|C^{\ell, r} D^{\ell, r})\\
&\leq\sfH(U)-\sfH(K|C^{\ell, r} D^{\ell, r})\\
&\leq 2\epsilon\cdot \log(\frac{|\cK|}{\epsilon})\\
&\leq 4\epsilon  n\log(\frac{|\Sigma|}{\epsilon})
\end{split}
\end{equation}

2). Second, we show $\log |\cK|-\sfH(K)\leq 2\epsilon\cdot \log(\frac{|\cK|}{\epsilon})$. 

From $\sfH(K|C^{\ell, r} D^{\ell, r}) \leq \sfH(K) \leq \log |\cK|$, $\sfH(U) = \log |\cK|$, and (Eq. \ref{eq_sec1} and \ref{eq_sec2}), it implies,
\bequation
\log |\cK| - \sfH(K) \leq 4\epsilon n \log(\frac{|\Sigma|}{\epsilon})
\eequation
\qed
\end{proof}

%%%%%%%%%%%%%%%Proof of Lemma 2
\section{Proof of Lemma \ref{le_rel}}\label{ap_le_rel}

\begin{proof}
From the definition of reliability (Eq. (\ref{pf_bound1})), the probability that Bob outputs the wrong key is bounded as follow,
\bequation\label{pf_rel1}
\Pr(K_A\neq K)\leq \delta \mbox{\;\;\;\;\;\; and \;\;\;\;\;}  \Pr(K_B\neq K)\leq \delta
\eequation
From Fano's lemma and (Eq. \ref{pf_rel1}), it implies, 
\bequation 
\sfH(K|C^\ell Y^\ell)=\sfH(K|K_A)\leq \sfH(\delta)+\delta\log |\cK|
\eequation
and,
\bequation 
\sfH(K|D^\ell X^\ell)\leq \sfH(K|K_B)\leq \sfH(\delta)+\delta\log |\cK|
\eequation
Since $C^{\ell,a}=X^{\ell,a},C^{\ell,d}=X^{\ell,d},D^{\ell,a}=Y^{\ell,a},D^{\ell,d}=Y^{\ell,d}$, it implies,

\bequation \label{pf_rel2}
\sfH(K|C^{\ell,a}C^{\ell,b}C^{\ell,c}C^{\ell,d} D^{\ell,a}Y^{\ell,b}Y^{\ell,c}D^{\ell,d})\leq \sfH(\delta)+\delta\log |\cK|
\eequation
and, 
\bequation \label{pf_rel3}
\sfH(K|D^{\ell,a}D^{\ell,b}D^{\ell,c}D^{\ell,d} C^{\ell,a}X^{\ell,b}X^{\ell,c}C^{\ell,d})\leq \sfH(\delta)+\delta\log |\cK|
\eequation

From,
\begin{equation}
\begin{split}
&\sfH(K D^{\ell,b}D^{\ell,c} X^{\ell,b}X^{\ell,c} | D^{\ell,a}D^{\ell,d}C^{\ell,a}C^{\ell,d})\\
&=\sfH(K | D^{\ell,b}D^{\ell,c} X^{\ell,b}X^{\ell,c} D^{\ell,a}D^{\ell,d}C^{\ell,a}C^{\ell,d})+\sfH(D^{\ell,b}D^{\ell,c} X^{\ell,b}X^{\ell,c} | D^{\ell,a}D^{\ell,d}C^{\ell,a}C^{\ell,d})\\
&=\sfH(D^{\ell,b}D^{\ell,c} X^{\ell,b}X^{\ell,c} | D^{\ell,a}D^{\ell,d}C^{\ell,a}C^{\ell,d} K ) +\sfH(K| D^{\ell,a}D^{\ell,d}C^{\ell,a}C^{\ell,d}) 
\end{split}
\end{equation}

it implies,
\begin{equation} \label{pf_rel4}
\begin{split}
&\sfH(K| D^{\ell,a}D^{\ell,d}C^{\ell,a}C^{\ell,d}) \\
&\leq  \sfH(D^{\ell,b}D^{\ell,c} X^{\ell,b}X^{\ell,c} | D^{\ell,a}D^{\ell,d}C^{\ell,a}C^{\ell,d}) - \sfH(D^{\ell,b}D^{\ell,c} X^{\ell,b}X^{\ell,c} | D^{\ell,a}D^{\ell,d}C^{\ell,a}C^{\ell,d} K ) \\
&\;\;\;\;+\sfH(\delta)+\delta\log |\cK|
\end{split}
\end{equation}

From,
\begin{equation}
\begin{split}
&\sfH(D^{\ell,b}D^{\ell,c} X^{\ell,b}X^{\ell,c} | D^{\ell,a}D^{\ell,d}C^{\ell,a}C^{\ell,d}C^{\ell,b}C^{\ell,c}Y^{\ell,b}Y^{\ell,c})\\
&=\sfH(K | D^{\ell,a}D^{\ell,d}C^{\ell,a}C^{\ell,d}C^{\ell,b}C^{\ell,c}Y^{\ell,b}Y^{\ell,c}) \\
&\qquad + \sfH(D^{\ell,b}D^{\ell,c} X^{\ell,b}X^{\ell,c} | D^{\ell,a}D^{\ell,d}C^{\ell,a}C^{\ell,d}C^{\ell,b}C^{\ell,c}Y^{\ell,b}Y^{\ell,c} K)\\
&\qquad - \sfH(K | D^{\ell,b}D^{\ell,c} X^{\ell,b}X^{\ell,c} D^{\ell,a}D^{\ell,d}C^{\ell,a}C^{\ell,d}C^{\ell,b}C^{\ell,c}Y^{\ell,b}Y^{\ell,c})\\
&\leq \sfH(K | D^{\ell,a}D^{\ell,d}C^{\ell,a}C^{\ell,d}C^{\ell,b}C^{\ell,c}Y^{\ell,b}Y^{\ell,c}) \\
&\qquad + \sfH(D^{\ell,b}D^{\ell,c} X^{\ell,b}X^{\ell,c} | D^{\ell,a}D^{\ell,d}C^{\ell,a}C^{\ell,d}C^{\ell,b}C^{\ell,c}Y^{\ell,b}Y^{\ell,c} K)\\
&\overset{(1)}{\leq} \sfH(\delta)+\delta\log |\cK| +  \sfH(D^{\ell,b}D^{\ell,c} X^{\ell,b}X^{\ell,c} | D^{\ell,a}D^{\ell,d}C^{\ell,a}C^{\ell,d} K)
\end{split}
\end{equation}
Here (1) is from (Eq. \ref{pf_rel2}), and, 
\begin{equation}
\begin{split}
&\sfH(D^{\ell,b}D^{\ell,c} X^{\ell,b}X^{\ell,c} | D^{\ell,a}D^{\ell,d}C^{\ell,a}C^{\ell,d}C^{\ell,b}C^{\ell,c}Y^{\ell,b}Y^{\ell,c} K)\\
& \qquad\qquad\qquad\qquad\qquad   \leq \sfH(D^{\ell,b}D^{\ell,c} X^{\ell,b} X^{\ell,c} | D^{\ell,a}D^{\ell,d}C^{\ell,a}C^{\ell,d} K),
\end{split}
\end{equation}
It implies,
\begin{equation} \label{pf_rel5}
\begin{split}
& \sfH(D^{\ell,b}D^{\ell,c} X^{\ell,b}X^{\ell,c} | D^{\ell,a}D^{\ell,d}C^{\ell,a}C^{\ell,d} C^{\ell,b}C^{\ell,c}Y^{\ell,b}Y^{\ell,c} )\\
& \qquad \leq \sfH(D^{\ell,b}D^{\ell,c} X^{\ell,b}X^{\ell,c} | D^{\ell,a}D^{\ell,d}C^{\ell,a}C^{\ell,d} K) + \sfH(\delta)+\delta\log |\cK| 
\end{split}
\end{equation}

From (Eq. \ref{pf_rel4} and \ref{pf_rel5}), it implies,
\begin{equation} \label{pf_rel6}
\begin{split}
&\sfH(K| D^{\ell,a}D^{\ell,d}C^{\ell,a}C^{\ell,d}) \\
&\leq   \sfH(D^{\ell,b}D^{\ell,c} X^{\ell,b}X^{\ell,c} | D^{\ell,a}D^{\ell,d}C^{\ell,a}C^{\ell,d})\\
& - \sfH(D^{\ell,b}D^{\ell,c} X^{\ell,b}X^{\ell,c} | D^{\ell,a}D^{\ell,d}C^{\ell,a}C^{\ell,d} C^{\ell,b}C^{\ell,c}Y^{\ell,b}Y^{\ell,c} ) + 2\sfH(\delta)+ 2\delta\log |\cK|\\
&= \sfI(D^{\ell,b}D^{\ell,c} X^{\ell,b}X^{\ell,c}; C^{\ell,b}C^{\ell,c}Y^{\ell,b}Y^{\ell,c} | D^{\ell,a}D^{\ell,d}C^{\ell,a}C^{\ell,d}) + 2\sfH(\delta)+2\delta\log |\cK|
\end{split}
\end{equation}

From $adv_1$, since $(X^{\ell,b},X^{\ell,c})=(C^{\ell,b},C^{\ell,c})+(E^{\ell,b},E^{\ell,c})$ and $(Y^{\ell,b},Y^{\ell,c})=(D^{\ell,b},D^{\ell,c})+({E'}^{\ell,b},{E'}^{\ell,c})$, it implies,

\begin{equation} \label{pf_rel7}
\sfI(D^{\ell,b}D^{\ell,c} X^{\ell,b}X^{\ell,c}; C^{\ell,b}C^{\ell,c}Y^{\ell,b}Y^{\ell,c} | D^{\ell,a}D^{\ell,d}C^{\ell,a}C^{\ell,d}) = 0
\end{equation}
From (Eq. \ref{pf_rel6} and \ref{pf_rel7}) it implies,
\[
\sfH(K| D^{\ell,a}D^{\ell,d}C^{\ell,a}C^{\ell,d})\leq 2\sfH(\delta)+2\delta\log |\cK|
\]
\qed
\end{proof}

\end{appendix}

%%%%%%%%lvcode%%%%%%%%

\section{LV Adversary Code}\label{ap_lvcode}

The construction of LV adversary code use the algebraic manipulation detection code (AMD code).

\subsubsection{LV adversary code Construction}

The construction of LV adversary code achieves reliable communication over LV adversary channel (or AWTP channel), even with the condition of $\rho_r + \rho_w \geq 1$. 
The idea of LV adversary code construction is that sender first encode the message using Reed-Solomon code. Then for each components of Reed-Solomon code, sender uses AMD code to authentication each component. When the receiver receives the LV adversary code, for each components the receiver uses AMD code verification algorithm to check whether the components has been tampered by adversary or not. For the components of LV adversary code that adversary only write, there is high chance to detect the error. For the rest of components of codeword, the receiver use the Reed-Solomon code decoding algorithm to output the correct message. 

Let the message of LV adversary code has length $\ell$. The length of LV adversary code is $n$. The LV adversary code is over $\bF_q^u$. To construct LV adversary code, we use $(\ell, n)$-Reed-Solomon code over $\bF_q^{u-2}$, and $(\bF_q^{u-2}, \bF_q^{u}, \frac{d+1}{q})$-AMD code.
The encoding algorithm is in Figure \ref{fig_lvencoding1}, and decoding algorithm is in Figure \ref{fig_lvencoding1}.

\begin{center}
{\bf Fig.\ref{fig_lvencoding1}.} LV adversary code Encoding algorithm 
\end{center}
\vspace{-4mm}
\begin{framed} 
\begin{enumerate}

\item Step1: For message $m$, the sender encodes the message into Reed-Solomon code $c_\rs$. 

\item Step2: For each component $c_i$ for $i=1, \cdots, n$, the sender encodes $c_i$ into AMD code $(c_i, r_i, t_i)$. 

\end{enumerate}
\label{fig_lvencoding1}
\end{framed}

The LV adversary can read on set $S_r$ and add error on set $S_w$. The receiver receives corrupted word $y$.

\begin{center}
{\bf Fig.\ref{fig_lvdecoding2}.} LV adversary code Decoding algorithm 
\end{center}
\vspace{-4mm}
\begin{framed} 
\begin{enumerate}

\item Step1: For each component $y_i = (c_i', r_i', t_i')$ for $i=1, \cdots, n$, the receiver uses AMD code verification algorithm to check if the AMD code is valid, that is $t_i' \overset{?}{=}f(c_i', r_i')$. 

\item Step2: The receiver discard the error components. For the rest components passing the AMD code verification algorithm, the receiver uses Reed-Solomon code decoding algorithm to output the message $m$. 

\end{enumerate}
 \label{fig_lvdecoding2}
\end{framed}

We show the rate of the LV adversary code.

\begin{theorem}
The LV adversary code achieves rate $R = 1 - \rho_r - 2\rho_w + \rho$ over $(\rho_r, \rho_w, \rho)$-AWTP channel, except with error probability $\delta \leq \frac{un}{q}$, in $Poly(n)$.
\end{theorem}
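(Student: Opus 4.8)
The plan is to analyze the concatenated code (Reed--Solomon on the outside, a systematic AMD tag on each symbol on the inside) position by position. First I would classify the $n$ codeword symbols by the adversary's sets: a symbol in neither $S_r$ nor $S_w$ arrives unchanged; a symbol in $S_r\setminus S_w$ is read but not modified, so it also arrives unchanged; a symbol in the ``write--only'' set $S_w\setminus S_r$ is modified by an additive difference $\Delta$ that the adversary must choose \emph{blindly} (it never sees that symbol's value or AMD randomness); and a symbol in the ``read--write'' set $S_r\cap S_w$ can be set by the adversary to \emph{any} valid AMD codeword, since the evaluation map $f$ of (\ref{EqAMDFunction}) is public. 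Counting, $|S_r\cap S_w|=|S_r|+|S_w|-|S_r\cup S_w|\le(\rho_r+\rho_w-\rho)n$ and $|S_w\setminus S_r|\le(\rho-\rho_r)n$, while the remaining $\ge(1-\rho_r)n$ symbols are received correctly; in particular all corrupted symbols lie inside $S_w$, of total size $\le\rho_w n$.

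Next I would invoke AMD security. For each write--only symbol the injected $\Delta$ is a function of the adversary's view only, which is independent of that symbol's fresh AMD randomness; by Lemma~\ref{le_amd} (which is worst case over the source value), the AMD check fails to flag such a symbol with probability at most $\frac{d+1}{q}=\frac{u-1}{q}$. A union bound over the at most $n$ write--only symbols shows that, except with probability $\delta\le\frac{(u-1)n}{q}\le\frac{un}{q}$, every write--only symbol is detected and can be handed to the outer decoder as an \emph{erasure} at a known location, while each read--write symbol is, in the worst case, an \emph{error} at an unknown location. Conditioned on this good event the Reed--Solomon decoder faces at most $e=(\rho_r+\rho_w-\rho)n$ errors and $s=(\rho-\rho_r)n$ erasures, hence it recovers $m$ whenever $2e+s\le n-\ell$, i.e. whenever $(\rho_r+2\rho_w-\rho)n\le n-\ell$. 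Choosing the outer dimension $\ell=\lfloor(1-\rho_r-2\rho_w+\rho)n\rfloor$ makes this hold; when $\rho_r+2\rho_w\ge1+\rho$ this dimension is non-positive and the statement is vacuous, consistent with the claimed rate being $\le 0$ there.

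For the rate and efficiency: the message is a length-$\ell$ vector over $\bF_q^{u-2}$ while the codeword is a length-$n$ vector over $\Sigma=\bF_q^u$, so the transmission rate is $\frac{\ell\log q^{\,u-2}}{n\log q^{\,u}}=\frac{u-2}{u}\cdot\frac{\ell}{n}$, which tends to $1-\rho_r-2\rho_w+\rho$ as $u$ grows; this $\frac{u-2}{u}$ loss is exactly the cost of the per-symbol AMD tags and is why the alphabet, though constant, must be taken large. Running time is clearly $\mathrm{poly}(n)$: AMD encoding and verification are a constant number of $\bF_q$ operations per symbol (evaluating $f$), and Reed--Solomon encoding together with errors-and-erasures decoding (e.g. Berlekamp--Welch with an erasure locator, or Berlekamp--Massey) is polynomial in $n$ over $\bF_q$.

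\textbf{Main obstacle.} The delicate part is the accounting in the first two paragraphs rather than any heavy computation: one must argue carefully that read--write symbols are genuinely \emph{arbitrary} (not merely additive) errors; that write--only symbols remain truly independent of the AMD randomness despite the adversary being adaptive and able to read a codeword-wide slice, so that Lemma~\ref{le_amd} applies per symbol and the union bound is legitimate; and that these two phenomena translate into a single errors-plus-erasures budget $2e+s$ that one RS decoding pass can clear, with $e$ and $s$ pinned down from the three size constraints $|S_r|\le\rho_r n$, $|S_w|\le\rho_w n$, $|S_r\cup S_w|\le\rho n$. Getting the interaction of these three inequalities right is what yields the clean exponent $1-\rho_r-2\rho_w+\rho$ and removes the usual $\rho_r+\rho_w<1$ restriction.
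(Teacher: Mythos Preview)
Your proposal is correct and follows essentially the same approach as the paper: partition the $n$ positions according to membership in $S_r$ and $S_w$, invoke AMD security with a union bound to flag the write-only positions, and hand the result to Reed--Solomon decoding, the arithmetic $|S_r\cap S_w|=(\rho_r+\rho_w-\rho)n$ being the key identity. The only cosmetic difference is that you treat the detected write-only positions as erasures in an errors-and-erasures decoder while the paper discards them and runs error-only decoding on the shortened word of length $n'=n-(\rho_w-\rho_0)n$; for an MDS code these are equivalent and both yield $\ell\le(1-\rho_r-2\rho_w+\rho)n$ (and you are in fact slightly more careful than the paper in accounting for the $(u-2)/u$ alphabet-size loss in the rate).
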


\begin{proof}
We denote the components $|S_r \cap S_w| = \rho_0 n$. Since $|S_r \cap S_w| + |S_r \cup S_w| = |S_r| + |S_w|$, it implies, $\rho_0 + \rho = \rho_r + \rho_w$.

The components of codeword can be divided into four categories: not corrupted, read only, read and write, write only.

For the write only components $(c_i, r_i, t_i)$, since the adversary does not know the AMD code $(c_i, r_i, t_i)$, the probability that adversary tampered AMD code $(c'_i, r'_i, t'_i)$ passes verification is bounded by $\frac{u}{q}$. Since there are at most $n$ write only components, the probability of any writing only components pass AMD code verification algorithm is bounded by $\frac{un}{q}$. 

For the rest components including not corrupted, read only, read and write components, the length of codeword is $n' = n - (\rho_w - \rho_0)n$. Since the length of error  is $\rho_0 n$, the receiver can uniquely output the correct message if the length of message $\ell \leq n'- 2\rho_0 n$. So the rate of codeword is bounded as follow,

\begin{equation}
\begin{split}
R &\leq \frac{\ell}{n} = \frac{ n' - 2\rho_0 n }{n} = 1 - (\rho_w - \rho_0) - 2\rho_0 \\
&= 1 - \rho_w - \rho_0 = 1 +\rho - \rho_r - 2\rho_w 
\end{split}
\end{equation}

\end{proof}

\section{Relation Between Upper Bound of Key Agreement}\label{ApRelationUpBound}

We show the relation between the upper bound of key agreement \cite{MAU97} in which Alice and Bob generate key using public discussion from the shared triple variable with distribution $\bPr_{XYZ}$, and key agreement over AWTP-PD channel.

\begin{lemma}
The rate of key agreement is $R_\sk \leq \frac{1}{n \log |\Sigma|} \min( \sfI(X^n; Y^n), \sfI(X^n; Y^n | Z^n) )$.
 The relation between mutual information entropy and reading and writing parameters $(\rho_r, \rho_w)$ of adversary wiretap channel are $\sfI(X^n ; Y^n | Z^n) = (1 - \rho)n \log |\Sigma|$ and $\sfI( X^n; Y^n) \leq (1 - \rho_w)n \log |\Sigma|$. So the rate of key agreement protocol is bounded by $R_\sk \leq 1 - \rho$. 
\end{lemma}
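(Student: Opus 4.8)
The plan is to derive the bound in two stages: first invoke Maurer's rate bound for secret key agreement over a public discussion channel, and then evaluate the two mutual-information terms for the $(\rho_r,\rho_w)$-correlation explicitly. For the first stage I would read inequality~(\ref{eq1}) in rate form: for an $(\epsilon,\delta)$-secure key agreement protocol run over a PD channel starting from the triple $\bPr_{X^nY^nZ^n}$, Maurer's bound gives $\sfH(K)\le \min\{\sfI(X^n;Y^n),\,\sfI(X^n;Y^n\mid Z^n)\}+h(\epsilon)+\epsilon(|\cK|-1)$. Dividing by $n\log|\Sigma|$ and letting $\epsilon,\delta\to 0$ as $n\to\infty$ yields $R_\sk\le \frac{1}{n\log|\Sigma|}\min\{\sfI(X^n;Y^n),\,\sfI(X^n;Y^n\mid Z^n)\}$, which is the first assertion of the lemma. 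Because the AWTP--PD protocol uses the $(\rho_r,\rho_w)$-AWTP channel only in its first round and the PD channel in all later rounds, the joint distribution available to the parties after round one is precisely a $(\rho_r,\rho_w)$-correlation, so this bound transfers directly to our setting.

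For the second stage I would fix, as the analyst, the simple adversary $Adv_1$ used in the proof of Theorem~\ref{the_boundkey}: it pre-selects reading and writing sets with $|S_r|=\rho_r n$ and $|S_w|=\rho_w n$, chosen so that $|S_r\cup S_w|=\rho n$, and adds to the transmitted $X^n$ an error vector $e$ that is uniform and independent, with support contained in $S_w$. Writing $Y^n=X^n+e$ and $Z^n=X^n|_{S_r}$, and partitioning the coordinates into the read-only block $a=S_r\setminus S_w$, the read-and-write block $b=S_r\cap S_w$, the write-only block $c=S_w\setminus S_r$ and the untouched block $d=[n]\setminus(S_r\cup S_w)$ (so $|d|=(1-\rho)n$ and $|b|+|c|=|S_w|=\rho_w n$), I would compute $\sfH(Y^n\mid X^nZ^n)=\sfH(e)=\rho_w n\log|\Sigma|$, and then observe that $Y^n|_a$ is a function of $Z^n$, that $Y^n|_b$ and $Y^n|_c$ are uniform on their blocks and independent of everything else given $Z^n$ (because $e|_{b\cup c}$ is uniform and independent and $X^n|_c$ is not read), and that $Y^n|_d=X^n|_d$ is uniform and independent of $Z^n$; hence $\sfH(Y^n\mid Z^n)=(|b|+|c|+|d|)\log|\Sigma|$. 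Subtracting gives $\sfI(X^n;Y^n\mid Z^n)=(|b|+|c|+|d|-\rho_w n)\log|\Sigma|=|d|\log|\Sigma|=(1-\rho)n\log|\Sigma|$. The same computation without conditioning on $Z^n$ gives $\sfH(Y^n)=n\log|\Sigma|$ and hence $\sfI(X^n;Y^n)=(1-\rho_w)n\log|\Sigma|$. Since $\rho\ge\rho_w$, the minimum of the two quantities is $(1-\rho)n\log|\Sigma|$, and substituting into the rate bound gives $R_\sk\le 1-\rho$.

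The step I expect to be the main obstacle is the rigorous evaluation of $\sfH(Y^n\mid Z^n)$: one must justify that the four coordinate blocks of $Y^n$ are mutually independent given $Z^n$ and pin down their conditional laws, which relies on $e$ being chosen from Eve's view alone (so that, conditioned on $Z^n$, the vector $e$ is independent of $X^n|_{c\cup d}$) together with the fact that $X^n$ is uniform i.i.d.\ over $\Sigma^n$. A minor technical nuisance is the AWTP constraint $\mathsf{SUPP}(e)=S_w$: forcing all components of $e$ on $S_w$ to be nonzero replaces $\log|\Sigma|$ by $\log(|\Sigma|-1)$ in $\sfH(e)$, an $O(1/|\Sigma|)$ perturbation per coordinate that can be absorbed into the vanishing error terms, or sidestepped by letting the analyst's adversary use $\mathsf{SUPP}(e)\subseteq S_w$, which is all that is needed for an upper bound on $R_\sk$. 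Finally I would note that adaptivity of the adversary is immaterial here, since $X^n$ carries no structure that the choice of $S_r,S_w$ could exploit, so the non-adaptive $Adv_1$ is without loss of generality for the bound.
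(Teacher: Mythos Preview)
Your proposal is correct and follows essentially the same approach as the paper: invoke Maurer's rate bound, then evaluate the two mutual informations for the $(\rho_r,\rho_w)$-correlation under the oblivious adversary $Adv_1$ via the identity $\sfI(X^n;Y^n\mid Z^n)=\sfH(Y^n\mid Z^n)-\sfH(Y^n\mid X^n)$. The only cosmetic difference is that the paper writes $X^n=Z^n+\overline{Z}^n$ and bounds $\sfH(\overline{Z}^n+E^n)$, whereas you partition coordinates into the four blocks $a,b,c,d$ and compute $\sfH(Y^n\mid Z^n)$ block by block; your version in fact yields the exact equalities stated in the lemma rather than the one-sided inequalities the paper's proof records.
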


\begin{proof}

We assume that Eve reads on the $S_r$ components of $(X_1, \cdots, X_n)$ with $|S_r| = \rho_r n$, and add random error on the $S_w$ components of codeword with $|S_w| = \rho_w n$. The components that adversary either read or write is $S = S_r \cup S_w$ with $|S| = \rho n$.

First, from Theorem 4 \cite{MAU93}, the rate of key agreement protocol $R_\sk$ is bounded by,
\[
R_\sk \leq \frac{1}{n \log |\Sigma|} \min( \sfI(X^n; Y^n), \sfI(X^n; Y^n | Z^n) )
\]

Second, we show that $\sfI(X^n;  Y^n) \leq (1 - \rho_w)n \log |\Sigma|$. 

We have,
\[
\sfH( Y^n | X^n) = \sfH( E^n | X^n ) = \sfH( E^n) = \rho_w n \log |\Sigma|
\]
So, 
\[
\begin{split}
& \sfI ( X^n; Y^n) = \sfH( Y^n ) - \sfH( Y^n|X^n) \leq n\log |\Sigma| - \rho_w n \log |\Sigma| \\
&= ( n - \rho_w n ) \log |\Sigma| 
\end{split}
\]

Third, we show that $\sfI( X^n; Y^n | Z^n) \leq (1 - \rho)n \log |\Sigma|$.  Let $\overline{Z}^n$ be the random variable that adversary does not read. Since $Z^n$ is equal to $X^n$ on set $S_r$, and zero on $[n]/S_r$, it implies $\overline{Z}^n$ is equal to $X^n$ on set $[n]/S_r$, and zero on $S_r$. So there is $X = Z + \overline{Z}$. We have,

\[
\begin{split}
\sfI(X^n ; Y^n |  Z^n ) & = \sfH( X^n | Z^n ) - \sfH( X^n | Y^n, Z^n) \\
& = \sfH(X^n, Z^n ) - \sfH(Z^n) - ( \sfH( X^n, Y^n, Z^n) - \sfH( Y^n, Z^n ) )\\
& =  \sfH( Y^n | Z^n ) - \sfH( Y^n | X^n )\\
& = \sfH( X^n + E^n | Z^n) - \rho_wn \log |\Sigma|\\
& = \sfH( \overline{Z}^n + Z^n + E^n | Z^n) - \rho_wn \log |\Sigma|\\
& \leq \sfH( \overline{Z}^n + E^n ) - \rho_wn \log |\Sigma|\\
& \leq (n - (\rho - \rho_w)n )\log|\Sigma| - \rho_wn \log |\Sigma|\\
& = (1 - \rho)n \log|\Sigma| 
\end{split}
\]
\qed
\end{proof}

\end{document}